\definecolor{dullmagenta}{rgb}{0.4,0,0.4}   
\definecolor{darkblue}{rgb}{0,0,0.4}
\newtheorem{theorem}{Theorem}
\newtheorem{lemma}[theorem]{Lemma}
\newtheorem{definition}[theorem]{Definition}
\newcommand{\Pain}[1]{\textup{P}_{\mathrm{#1}}}
\theoremstyle{definition}
\theoremstyle{remark}
\newtheorem{remark}{Remark}
\newtheorem*{notation*}{Notation} 
\numberwithin{equation}{section}
\begin{document}

{\noindent\Large\bf What is the symmetry group of a d-$\Pain{II}$ discrete Painlev\'e equation? 
}
\medskip

\begin{flushleft}

\textbf{Anton Dzhamay}\\
Beijing Institute of Mathematical Sciences and Applications (BIMSA)\\ 
No.~544, Hefangkou Village Huaibei Town, Huairou District, Beijing 101408 \\
E-mail: \href{mailto:adzham@bimsa.cn}{\texttt{adzham@bimsa.cn}}\qquad 
ORCID ID: \href{https://orcid.org/0000-0001-8400-2406}{\texttt{0000-0001-8400-2406}}\\[5pt]

\textbf{Yang Shi}\\
College of Science and Engineering, Flinders University,\\ Flinders at Tonsley, South Australia 5042, Australia\\
E-mail: \href{mailto:yshi7200@gmail.com}{\texttt{yshi7200@gmail.com}}\qquad
ORCID ID: \href{https://orcid.org/0000-0001-5862-1406}{\texttt{0000-0001-5862-1406}}\\[8pt]

\textbf{Alexander Stokes}\\
Waseda Institute for Advanced Study, Waseda University (WIAS) \\1-21-1 Nishi Waseda, Shinjuku-ku, Tokyo 169-0051, Japan\\
E-mail: \href{mailto:stokes@aoni.waseda.jp}{\texttt{stokes@aoni.waseda.jp}}\qquad
ORCID ID: \href{https://orcid.org/0000-0001-6874-7141}{\texttt{0000-0001-6874-7141}}\\[8pt]

\textbf{Ralph Willox}\\
Graduate School of Mathematical Sciences, The University of Tokyo,\\ 3--8--1 Komaba, Meguro-ku, \\Tokyo 153-8914, Japan\\
E-mail: \href{mailto:willox@ms.u-tokyo.ac.jp}{\texttt{willox@ms.u-tokyo.ac.jp}}\qquad
ORCID ID: \href{https://orcid.org/0000-0002-8054-0239}{\texttt{0000-0002-8054-0239}}\\[8pt]

\emph{Keywords}: discrete Painlev\'e equations,  birational transformations, affine Weyl groups, Cremona transformations\\[3pt]

\emph{MSC2020}: 33D45, 34M55, 34M56, 14E07, 39A13

\end{flushleft}

\date{}

\begin{abstract}
	The symmetry group of a (discrete) Painlev\'e equation provides crucial information on the properties of the equation. In this paper we argue against the commonly-held belief that the symmetry group of a given equation is solely determined by its surface type as given in the famous Sakai classification. We will dispel this misconception on a specific example of a d-$\Pain{II}$ equation which corresponds to a half-translation on the weight lattice of the root system dual to its surface-type root lattice, but which becomes a genuine translation on a sub-lattice thereof that corresponds to its real symmetry group. The latter fact is shown in two different ways: first by a brute force calculation and second through the use of normalizer theory, which we believe to be an extremely useful tool for this purpose. We finish the paper with the analysis of a sub-case of our main example which arises in the study of gap probabilities for Freud unitary ensembles, and the symmetry group of which is even further restricted due to the appearance of a nodal curve on the surface on which the equation is regularized.
\end{abstract}


\section{Introduction} 
\label{sec:introduction}

The theory of discrete Painlev\'e equations is roughly a quarter century old and during this time we have 
achieved a very good understanding of these equations. 
However, a lot of terminology reflects the historic
development of the theory, which can lead to various misconceptions. One of the goals of the present paper
is to address some of these misconceptions, in particular the relationship between the \emph{surface-type}
and the \emph{symmetry-type} classification schemes, as well as the relationship between an equation and 
its symmetry group. But let us begin with the following definition.

\begin{definition}
	An (abstract) discrete Painlev\'e equation is a triple $(\mathcal{R}, \mathcal{R}^{\perp}, [w])$, where 
	$\mathcal{R}$ and $\mathcal{R}^{\perp}$ are two root sub-systems 
	(described by affine Dynkin diagrams) of the affine root system $E_{8}^{(1)}$. The root system
	$\mathcal{R}$ describes the geometry of the configuration space of the dynamics, 
	the (fully) extended affine Weyl group $\widehat{W}$ (see Definition~\ref{def:fully-extended-W}) of type $\mathcal{R}^{\perp}$ describes the symmetry group of this
	configuration space, and the equation itself is described by an element $w\in \widehat{W}(\mathcal{R}^{\perp})$ of infinite order 	
	(a \emph{translation} or \emph{quasi-translation}), where $[w]$ is its equivalence class w.r.t.~conjugations in $\widehat{W}(\mathcal{R}^{\perp})$. 
\end{definition}

This definition may seem rather unconventional. In particular, there is no actual equation in the definition.
We think that this may, in fact, be the benefit of the suggested approach. To see this, as well as to understand
the connection to a more traditional definition of a discrete Painlev\'e equation, we need to revisit some history.

The name \emph{discrete Painlev\'e equation} is of course due to connections with \emph{differential Painlev\'e equations}. Recall that  
Painlev\'e equations appeared at the beginning of the $\text{XX}^{\text{th}}$ century
in an attempt to define \emph{nonlinear special functions} as solutions of
\emph{nonlinear} ordinary differential equations (ODEs), similar to the classical special functions such as Airy, Bessel, and many others, which solve linear differential equations.
For solutions of nonlinear ODE, one encounters the phenomenon of movable singular points (i.e., depending on the initial conditions), and if such a singularity is for example a branch point, we cannot talk about the Riemann surface for the general solution of the equation.
P. Painlev\'e suggested to study ODEs whose general solution have no movable critical points other than poles. We now say that
an ODE has the \emph{Painlevé Property} if the general solution of the equation is free of movable
critical points where it loses local single-valuedness. Painlev\'e, together with his student B. Gambier, found 
that an equation in the form $y'' = R(y',y,t)$ that has the Painlev\'e property can be put into one of fifty \emph{canonical forms}, 
of which \emph{six} can not be reduced to linear equations or solved in terms of classical special functions. These equations are now 
known as \emph{Painlev\'e equations} and their solutions are called \emph{Painlev\'e transcendents}. Differential Painlev\'e equations 
play an increasingly important role in modern Mathematical Physics, and we recommend \cite{Con:1999:PP,ConMus:2020:PH,FokItsKapNov:2006:PT} 
and references therein to the interested reader.

The term \emph{discrete Painlev\'e equation} probably appeared in the literature for the first time in the paper \cite{ItcKitFok:1991:MMTQGISPDE},
following earlier works \cite{BreKaz:1990:ESFTCS,GroMig:1990:NTTQG} on two-dimensional quantum gravity. This result quickly attracted
the attention of researchers working with discrete integrable systems, \cite{NijPap:1991:SRILDAPE,PapNijGraRam:1992:IDPDAPE}. A large number of
examples of discrete Painlev\'e equation has been obtained in the work of B.~Grammaticos and A.~Ramani who applied
the singularity confinement criterion to deautonomizations of discrete integrable autonomous mappings such as the QRT maps, see the survey
\cite{GraRam:2004:DPER} and references therein. In this approach, the term \emph{discrete Painlev\'e equation} denoted
a certain second-order non-autonomous recurrence relation that has one of the differential Painlev\'e equations as a
continuous limit. It is worth noting that the first example of such a non-autonomous recurrence goes back at least
50 years earlier, to the 1939 paper of J.~Shohat on orthogonal polynomials \cite{Sho:1939:DEOP}, but Shohat did not take a continuous limit and so the relationship to
differential Painlev\'e equations was missed.

This approach resulted in names such as d-$\Pain{II}$, or $q$-$\Pain{VI}$, or alt.~d-$\Pain{I}$ given to certain recurrences, based on the existence 
of a continuous limit. However, it quickly became clear that this naming scheme is quite confusing, and also that there are a lot more discrete Painlev\'e 
equations than the differential ones. An important breakthrough is due to H. Sakai \cite{Sak:2001:RSAWARSGPE} who, following the earlier works of 
K. Okamoto \cite{Oka:1979:FAESOPCFPP} for the differential Painlev\'e equations, approached discrete Painlev\'e equations from the point of view of 
algebraic geometry. Sakai's work clarified the algebraic nature of discrete Painlev\'e equations, and also resulted in a clear classification scheme. 
However, in using this classification scheme certain care is necessary and this is precisely the point that we want to address.

Let us first sketch some of the key points of Sakai's approach, referring the reader to the survey \cite{KajNouYam:2017:GAPE}, as well as Sakai's original paper
\cite{Sak:2001:RSAWARSGPE}, for careful statements and details. Sakai's point of view is that discrete Painlev\'e equations are discrete dynamical
systems on certain families of rational algebraic surfaces, called generalized Halphen
surfaces. 
These families can be obtained by blowing up a configuration of eight points on $\mathbb{P}^{(1)}_{\mathbb{C}} \times \mathbb{P}^{1}_{\mathbb{C}}$
(or, alternatively, a configuration of $9$ points on $\mathbb{P}^{2}_{\mathbb{C}}$). In the generic
case these points lie on a unique elliptic curve $D$ that can be thought of as a divisor of a section of the anti-canonical bundle, i.e., the polar divisor of
some rational 2-form $\omega$, $[D] = -[(\omega)] = -\mathcal{K}_{\mathbb{P}^{1} \times \mathbb{P}^{1}}$. 
After blowing up those points we obtain a surface $X$ with unique effective anti-canonical divisor $-K_{X}$. The Picard lattice of $X$ is generated by the 
coordinate classes $\mathcal{H}_{i}$ and the classes $\mathcal{E}_{i}$ of the exceptional divisors of the blowups,
\begin{equation*}
	\operatorname{Pic}(X) = \operatorname{Span}_{\mathbb{Z}}\{\mathcal{H}_{1},\mathcal{H}_{2}, \mathcal{E}_{1},\ldots,\mathcal{E}_{8}\},\qquad 
	- \mathcal{K}_{X} = [-K_{X}]= 2\mathcal{H}_{1} + 2 \mathcal{H}_{2} - \mathcal{E}_{1} - \cdots - \mathcal{E}_{8}.
\end{equation*}
Varying locations of the blowup points, but still keeping them in a general position, creates a \emph{family} $\mathcal{X}$ of such surfaces; the type of this family is denoted
by the symbol $A_{0}^{(1)}$. The group of \emph{symmetries} of this family, in the sense of \emph{Cremona isometries} on the level of the Picard lattice and their realisation as the \emph{Cremona action} by automorphisms of the family,
is the affine Weyl group $W\left(E^{(1)}_{8}\right)$. This group has \emph{translation} elements that, when acting on the surface family, 
define \emph{elliptic} discrete Painlev\'e equations. The location of points in the configuration evolves with each step, so the dynamics is non-autonomous. The embedding of the curve $D\subset \mathbb{P}^{1} \times \mathbb{P}^{1}$ can be given in terms of elliptic functions, and the point evolution becomes additive in the argument
of these elliptic functions. 
When written in coordinates, say, in the affine chart of $\mathbb{P}^{1} \times \mathbb{P}^{1}$, the coordinates of
blowup points become coefficients in the evolution equations, and this is what is meant by an elliptic difference equation.

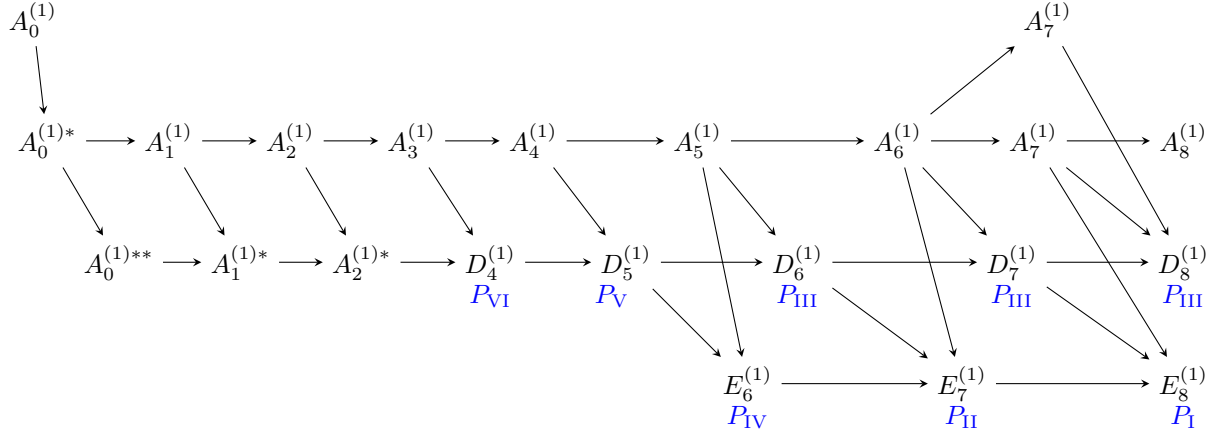
\begin{figure}[ht]
{
			\begin{tikzpicture}[>=stealth,scale=0.95]
			\node (e8e) at (0.8,3.4) {$A_{0}^{(1)}$};
			\node (a1qa) at (15,3.4) {$A_{7}^{(1)}$};
			\node (e8q) at (1,1.7) {$A_{0}^{(1)*}$};
			\node (e7q) at (2.7,1.7) {$A_{1}^{(1)}$};	
			\node (e6q) at (4.4,1.7) {$A_{2}^{(1)}$};	
			\node (d5q) at (6.1,1.7) {$A_{3}^{(1)}$};	
			\node (a4q) at (7.8,1.7) {$A_{4}^{(1)}$};	
			\node (a21q) at (10.1,1.7) {$A_{5}^{(1)}$};	
			\node (a11q) at (12.9,1.7) {$A_{6}^{(1)}$};	
			\node (a1q) at (14.8,1.7) {$A_{7}^{(1)}$};	
			\node (a0q) at (16.9,1.7) {$A_{8}^{(1)}$};						
			\node (e8d) at (2,0) {$A_{0}^{(1)**}$};	
			\node (e7d) at (3.7,0) {$A_{1}^{(1)*}$};	
			\node (e6d) at (5.4,0) {$A_{2}^{(1)*}$};	
			\node (d4d) at (7.2,0) {$D_{4}^{(1)}$};	
			\node (a3d) at (9.1,0) {$D_{5}^{(1)}$};	
			\node (a11d) at (11.5,0) {$D_{6}^{(1)}$};	
			\node (a1d) at (14.5,0) {$D_{7}^{(1)}$};	
			\node (a0d) at (16.9,0) {$D_{8}^{(1)}$};	
			\node (a2d) at (10.8,-1.7) {$E_{6}^{(1)}$};	
			\node (a1da) at (13.8,-1.7) {$E_{7}^{(1)}$};	
			\node (a0da) at (16.9,-1.7) {$E_{8}^{(1)}$};	
			\draw[->] (e8e) -> (e8q);	\draw[->] (a1qa) -> (a0d); 	
			\draw[->] (e8q) -> (e7q); 	\draw[->] (e8q) -> (e8d); 	
			\draw[->] (e7q) -> (e6q); 	\draw[->] (e7q) -> (e7d); 	
			\draw[->] (e6q) -> (d5q); 	\draw[->] (e6q) -> (e6d); 	
			\draw[->] (d5q) -> (a4q); 	\draw[->] (d5q) -> (d4d); 	
			\draw[->] (a4q) -> (a21q); 	\draw[->] (a4q) -> (a3d); 	
			\draw[->] (a21q) -> (a11q); \draw[->] (a21q) -> (a11d); \draw[->] (a21q) -> (a2d);	
			\draw[->] (a11q) -> (a1q); 	\draw[->] (a11q) -> (a1d); 	\draw[->] (a11q) -> (a1qa); 	\draw[->] (a11q) -> (a1da);
			\draw[->] (a1q) -> (a0q); 	\draw[->] (a1q) -> (a0d);	\draw[->] (a1q) -> (a0da);
			\draw[->] (e8d) -> (e7d);
			\draw[->] (e7d) -> (e6d);
			\draw[->] (e6d) -> (d4d);
			\draw[->] (d4d) -> (a3d);	
			\draw[->] (a3d) -> (a11d);	\draw[->] (a3d) -> (a2d);	
			\draw[->] (a11d) -> (a1d);	\draw[->] (a11d) -> (a1da);	
			\draw[->] (a1d) -> (a0d);	\draw[->] (a1d) -> (a0da);	
			\draw[->] (a2d) -> (a1da);	\draw[->] (a1da) -> (a0da);
			\node [blue] at ($(d4d.south) + (0,-0.1)$) {$P_{\text{VI}}$};
			\node [blue] at ($(a3d.south) + (-0.2,-0.1)$) {$P_{\text{V}}$};
			\node [blue] at ($(a11d.south) + (0,-0.1)$) {$P_{\text{III}}$};
			\node [blue] at ($(a1d.south) + (0,-0.1)$) {$P_{\text{III}}$};
			\node [blue] at ($(a0d.south) + (0,-0.1)$) {$P_{\text{III}}$};
			\node [blue] at ($(a2d.south) + (0,-0.1)$) {$P_{\text{IV}}$};
			\node [blue] at ($(a1da.south) + (0,-0.1)$) {$P_{\text{II}}$};
			\node [blue] at ($(a0da.south) + (0,-0.1)$) {$P_{\text{I}}$};
		\end{tikzpicture}
} 
\caption{Surface-type classification scheme for Painlev\'e equations}
\label{fig:Surface-Type-Classification}
\end{figure}

The next step is to consider various degenerations, e.g., the elliptic curve on which our configuration of points lie can degenerate into a rational curve with a cusp or a node, and then under further degenerations become reducible, where each component is a rational curve. 
We can then write this decomposition of the anti-canonical divisor 
into irreducible components as $-K_{X} = \sum_{i=1}^{n} m_{i} D_{i}$, where $m_{i}\in \mathbb{Z}_{>0}$ are the multiplicities. The 
intersection configuration (w.r.t. to the usual intersection product on $\operatorname{Pic}(X)$ 
given on the generators as $\mathcal{H}_{1}\bullet \mathcal{H}_{2} = - \mathcal{E}_{i}^{2} = -1$ and zero otherwise) is given by the negative of a generalised Cartan matrix of affine type, and so the degenerations can be described using the language of affine Dynkin diagrams. The resulting classification scheme is given on Figure~\ref{fig:Surface-Type-Classification}.
Note that differential Painlev\'e equations also appear on this diagram via the types of surfaces which provide their spaces of initial conditions as constructed by Okamoto \cite{Oka:1979:FAESOPCFPP}.

\begin{figure}[ht]{\footnotesize
			\begin{tikzpicture}[>=stealth,scale=0.95]
				\node (e8e) at (0.8,3.4) {$\left(E_{8}^{(1)}\right)^{e}$};
				\node (a1qa) at (15,3.4) {$\left(A_{1}^{(1)}\right)^{q}_{|\alpha|^{2}=8}$};
				\node (e8q) at (1,1.7) {$\left(E_{8}^{(1)}\right)^{q}$};
				\node (e7q) at (2.7,1.7) {$\left(E_{7}^{(1)}\right)^{q}$};
				\node (e6q) at (4.4,1.7) {$\left(E_{6}^{(1)}\right)^{q}$};
				\node (d5q) at (6.1,1.7) {$\left(D_{5}^{(1)}\right)^{q}$};
				\node (a4q) at (7.8,1.7) {$\left(A_{4}^{(1)}\right)^{q}$};
				\node (a21q) at (10.1,1.7) {$\left((A_{2} + A_{1})^{(1)}\right)^{q}$};
				\node (a11q) at (12.9,1.7) {$\left((A_{1} + A_{1})^{(1)}\right)^{q}$};
				\node  at (12.2,1.25) {$|\alpha|^{2}=14$};
				\node (a1q) at (15.2,1.7) {$\left(A_{1}^{(1)}\right)^{q}$};
				\node  at (14.8,1.25) {$|\alpha|^{2}=4$};
				\node (a0q) at (16.9,1.7) {$\left(A_{0}^{(1)}\right)^{q}$};
				\node (e8d) at (2,0) {$\left(E_{8}^{(1)}\right)^{\delta}$};
				\node (e7d) at (3.7,0) {$\left(E_{7}^{(1)}\right)^{\delta}$};
				\node (e6d) at (5.4,0) {$\left(E_{6}^{(1)}\right)^{\delta}$};
				\node (d4d) at (7.2,0) {$\left(D_{4}^{(1)}\right)^{c,\delta}$};
				\node (a3d) at (9.1,0) {$\left(A_{3}^{(1)}\right)^{c,\delta}$};
				\node (a11d) at (11.5,0) {$\left(2A_{1}^{(1)}\right)^{c,\delta}$};
				\node (a1d) at (14.5,0) {$\left(A_{1}^{(1)}\right)^{c,\delta}$};
				\node (a0d) at (16.9,0) {$\left(A_{0}^{(1)}\right)^{c}$};
				\node (a2d) at (10.5,-1.7) {$\left(A_{2}^{(1)}\right)^{c,\delta}$};
				\node (a1da) at (13.3,-1.7) {$\left(A_{1}^{(1)}\right)^{c,\delta}$};
				\node (a0da) at (16.9,-1.7) {$\left(A_{0}^{(1)}\right)^{c}$};
				\draw[->] (e8e) -> (e8q);	\draw[->] (a1qa) -> (a0d); 	
				\draw[->] (e8q) -> (e7q); 	\draw[->] (e8q) -> (e8d); 	
				\draw[->] (e7q) -> (e6q); 	\draw[->] (e7q) -> (e7d); 	
				\draw[->] (e6q) -> (d5q); 	\draw[->] (e6q) -> (e6d); 	
				\draw[->] (d5q) -> (a4q); 	\draw[->] (d5q) -> (d4d); 	
				\draw[->] (a4q) -> (a21q); 	\draw[->] (a4q) -> (a3d); 	
				\draw[->] (a21q) -> (a11q); \draw[->] (a21q) -> (a11d); \draw[->] (a21q) -> (a2d);	
				\draw[->] (a11q) -> (a1q); 	\draw[->] (a11q) -> (a1d); 	\draw[->] (a11q) -> (a1qa); 	\draw[->] (a11q) -> (a1da);
				\draw[->] (a1q) -> (a0q); 	\draw[->] (a1q) -> (a0d);	\draw[->] (a1q) -> (a0da);
				\draw[->] (e8d) -> (e7d);
				\draw[->] (e7d) -> (e6d);
				\draw[->] (e6d) -> (d4d);
				\draw[->] (d4d) -> (a3d);	
				\draw[->] (a3d) -> (a11d);	\draw[->] (a3d) -> (a2d);	
				\draw[->] (a11d) -> (a1d);	\draw[->] (a11d) -> (a1da);	
				\draw[->] (a1d) -> (a0d);	\draw[->] (a1d) -> (a0da);	
				\draw[->] (a2d) -> (a1da);	\draw[->] (a1da) -> (a0da);	
			\end{tikzpicture}} 

\caption{Symmetry-type classification scheme for Painlev\'e equations}
\label{fig:Symmetry-Type-Classification}
\end{figure}

This diagram gives a \emph{complete} classification of the possible \emph{configuration spaces} on which discrete Painlev\'e dynamics
can occur. In \cite{Sak:2001:RSAWARSGPE} Sakai also computed the groups of Cremona isometries and their Cremona action for each of these families, as extensions of affine Weyl groups of types shown on
Figure~\ref{fig:Symmetry-Type-Classification}. Very often in the literature, especially in applications of discrete Painlev\'e 
equations, it is that second classification scheme that is being used. However, it has been known for a long time that there are 
examples of discrete Painlev\'e dynamics that stay on some proper sub-families of the general configuration space, and if we restrict our attention to such sub-families
the symmetry group of that sub-family is different from the full symmetry group \cite{Tak:2003:WGSTDQE,AtkHowJosNak:2016:GEDER,CarDzhTak:2017:FDIMDPE,Shi:2019:VAAATDPE}.
In this way, we can get symmetry groups that do not appear explicitly in the classification scheme on Figure~\ref{fig:Symmetry-Type-Classification}.
Two particularly important examples of such sub-families are related to the existence of so-called nodal curves, which form obstructions to Cremona isometries, as explained in \cite{Sak:2001:RSAWARSGPE}, and the notion of \emph{projective reduction} introduced 
by K.~Kajiwara, N.~Nakazono, and T.~Tsuda, \cite{KajNakTsu:2011:PRDPSTA}. 
For the projective reduction scenario, the element of the extended affine
Weyl group corresponding to the dynamics is only a \emph{quasi-translation} (i.e., it is an element of infinite order that becomes a 
translation after being raised to some power). However, it is possible to choose a sub-family in the generic family, by imposing some parameter constraints,
so that the dynamics becomes a translation, both in terms of the evolution of the coefficients in the equation, and in terms of the 
symmetry group of the sub-family. 
The existence of nodal curves also corresponds to parameter constraints, and we note that combinations of constraints from both projective reduction and nodal curves are possible. 
Here we adopt the point of view that the symmetry group of a \emph{discrete} Painlev\'e equation is that of the surface (sub-)family forming its configuration space, and whose 
translation elements generate the resulting dynamics. 

We illustrate this situation by considering one of the most well-known and well-studied examples of discrete Painlev\'e equations, a 
discrete d-$\Pain{II}$ equation,
\begin{equation}\label{eq:dP-II-eq}
	x_{n+1} + x_{n-1} = \frac{(\alpha n + \beta)x_{n} + \gamma}{1 - x_{n}^{2}},
\end{equation}
where $\alpha$, $\beta$, and $\gamma$ are some complex parameters.
According to \cite{GraRam:2004:DPER}, the $\gamma=0$ case of equation \eqref{eq:dP-II-eq} was first identified as a discrete analogue of $\Pain{II}$ in \cite{NijPap:1991:SRILDAPE}, after having appeared around the same time in \cite{periwalshevitz}, and its continuous limit to a special case of $\Pain{II}$ taken.

We show that the d-$\Pain{II}$ equation \eqref{eq:dP-II-eq} is a discrete dynamical system on a sub-family of the Sakai $D_{5}^{(1)}$ surface family whose actual symmetry group is the proper subgroup
$\widetilde{W}\big(A_{1}^{(1)}\big) \times \widetilde{W}\big(A_{1}^{(1)}\big)$ of the full group 
$\widehat{W}\big(A_{3}^{(1)}\big)$ of Cremona isometries of the full $D_{5}^{(1)}$ surface family --- the corresponding group element is only a quasi-translation in $\widehat{W}\big(A_{3}^{(1)}\big)$ but becomes a proper translation  in $\widetilde{W}\big(A_{1}^{(1)}\big) \times \widetilde{W}\big(A_{1}^{(1)}\big)$.

The paper is organized as follows. In the next section we give a brief summary of the algebro-geometric data for the standard realization 
of the $D_{5}^{(1)}$-family, following \cite{KajNouYam:2017:GAPE}. In Section~\ref{sec:dP-II} we explain how equation \eqref{eq:dP-II-eq}
fits into this framework, what is the parameter constraint that defines the sub-family, and what is its symmetry group. 
Here we would like to stress again that by the symmetry group of a discrete Painlev\'e equation we mean an extended affine Weyl group whose birational representation on a surface sub-family \emph{generates} the equation, rather than the discrete symmetries of the equation itself.
In Section~\ref{sec:dP-34} we consider an example of recurrence that appeared in a recent paper by Chao Ming and Liwei Wang, \cite{MinWan:2025:DPXHAFPDFRME}. That example is \eqref{eq:dP-II-eq} with an additional constraint on the parameters, which results in the appearance of a so-called \textit{nodal curve}, thus further restricting the symmetry group of the equation. This example is particularly interesting since it combines two different types of parameter constraints. In the final section we give a brief summary and formulate our conclusions.


\section{The Algebro-Geometric Data and Discrete Painlev\'e Equations on the $D_{5}^{(1)}$ Sakai Surface Family} 
\label{sec:D5-family}

\subsection{Geometric Realization} 
\label{sub:geometric_realization}
To make the paper self-contained, in this section we reproduce some standard facts about the $D_{5}^{(1)}$-Sakai surface family, following \cite{KajNouYam:2017:GAPE}, 
and review some standard examples of discrete Painlev\'e equations on that surface. This was also considered in detail in \cite[Appendix]{CheDzhHu:2020:PLUEDPE}, so here we only 
collect some essential information. Surfaces in this family 
are characterized by the condition that the configuration of the irreducible components $d_{i}$ 
(equivalently, their classes $\delta_{i}=[d_i]\in \operatorname{Pic}(X)$ called the \emph{surface roots})
of the unique effective anti-canonical divisor is described by an affine Dynkin diagram of type $D_{5}^{(1)}$:

\begin{figure}[ht]
\begin{equation}\label{eq:d-roots-d51}			
 	\raisebox{-32.1pt}{\begin{tikzpicture}[
 			elt/.style={circle,draw=black!100,thick, inner sep=0pt,minimum size=2mm}]
 		\path 	(-1,1) 	node 	(d0) [elt, label={[xshift=-10pt, yshift = -10 pt] $\delta_{0}$} ] {}
 		        (-1,-1) node 	(d1) [elt, label={[xshift=-10pt, yshift = -10 pt] $\delta_{1}$} ] {}
 		        ( 0,0) 	node  	(d2) [elt, label={[xshift=-10pt, yshift = -10 pt] $\delta_{2}$} ] {}
 		        ( 1,0) 	node  	(d3) [elt, label={[xshift=10pt, yshift = -10 pt] $\delta_{3}$} ] {}
 		        ( 2,1) 	node  	(d4) [elt, label={[xshift=10pt, yshift = -10 pt] $\delta_{4}$} ] {}
 		        ( 2,-1) node 	(d5) [elt, label={[xshift=10pt, yshift = -10 pt] $\delta_{5}$} ] {};
 		\draw [black,line width=1pt ] (d0) -- (d2) -- (d1)  (d2) -- (d3) (d4) -- (d3) -- (d5);
 	\end{tikzpicture}} \qquad
	- \mathcal{K}_{X} = \delta = \delta_{0} + \delta_{1} + 2 \delta_{2} + 2 \delta_{3} + \delta_{4} + \delta_{5}.
 \end{equation}
	\caption{Affine Dynkin diagram $D_{5}^{(1)}$}
	\label{fig:d5-diag}
\end{figure}

There are different geometric realizations of this family, corresponding to different choices of the surface roots. For example, in \cite{KajNouYam:2017:GAPE}
the surface root basis is 
\begin{equation}\label{eq:d5-surf-roots}
	\begin{aligned}
		\delta_{0} &= \mathcal{E}_{1} - \mathcal{E}_{2}, &\quad  \delta_{2} &= \mathcal{H}_{1} - \mathcal{E}_{1} - \mathcal{E}_{3}, &\quad 
			\delta_{4} &= \mathcal{E}_{5} - \mathcal{E}_{6},\\		
		\delta_{1} &= \mathcal{E}_{3} - \mathcal{E}_{4} &\quad  \delta_{3} &= \mathcal{H}_{1} - \mathcal{E}_{5} - \mathcal{E}_{7} 	&\quad 
			\delta_{5} &= \mathcal{E}_{7} - \mathcal{E}_{8},		
	\end{aligned}
\end{equation}
and in \cite{Sak:2001:RSAWARSGPE} it is taken as
\begin{equation}\label{eq:d5-surf-roots-alt}
	\begin{aligned}
		\delta_{0} &= \mathcal{H}_{1} - \mathcal{E}_{1} - \mathcal{E}_{2}, &\quad  \delta_{2} &= \mathcal{E}_{2} - \mathcal{E}_{3}, &\quad 
			\delta_{4} &= \mathcal{E}_{5} - \mathcal{E}_{6},\\		
		\delta_{1} &= \mathcal{E}_{3} - \mathcal{E}_{4} &\quad  \delta_{3} &= \mathcal{H}_{2} - \mathcal{E}_{2} - \mathcal{E}_{5} 	&\quad 
			\delta_{5} &= \mathcal{H}_{1} - \mathcal{E}_{7} - \mathcal{E}_{8},		
	\end{aligned}
\end{equation}
The resulting surface families are equivalent via an explicit birational change of variables, as carefully explained in \cite[Appendix]{CheDzhHu:2020:PLUEDPE}.
We choose the surface root basis \eqref{eq:d5-surf-roots}. The standard basis of the symmetry roots $\alpha_{i}\in \operatorname{Pic}(X)$, 
$\alpha_{i}\bullet \delta_{j} = 0$, for this configuration is shown on Figure~\ref{fig:a-roots-a3}.

\begin{figure}[ht]
\centering
\begin{equation}\label{eq:a-roots-a31}			
	\raisebox{-32.1pt}{\begin{tikzpicture}[
			elt/.style={circle,draw=black!100,thick, inner sep=0pt,minimum size=2mm}]
		\path 	(-1,1) 	node 	(a0) [elt, label={[xshift=-10pt, yshift = -10 pt] $\alpha_{0}$} ] {}
		        (-1,-1) node 	(a1) [elt, label={[xshift=-10pt, yshift = -10 pt] $\alpha_{1}$} ] {}
		        ( 1,-1) node  	(a2) [elt, label={[xshift=10pt, yshift = -10 pt] $\alpha_{2}$} ] {}
		        ( 1,1) 	node 	(a3) [elt, label={[xshift=10pt, yshift = -10 pt] $\alpha_{3}$} ] {};
		\draw [black,line width=1pt ] (a0) -- (a1) -- (a2) --  (a3) -- (a0); 
		\draw [purple, dashed, line width = 0.5pt] (0,-1.5) -- (0,1.2)	node [pos=0,below] {\small $\sigma_{1}$};
		\draw [purple, dashed, line width = 0.5pt] (-1.5,-1.5) -- (1.2,1.2) node [pos=0,below left] {\small $\sigma_{2}$};
		\draw [purple, dashed, line width = 0.5pt] (1.5,-1.5) -- (-1.2,1.2) node [pos=0,below right] {\small $\sigma_{3}$};
	\end{tikzpicture}} \qquad
			\begin{alignedat}{2}
			\alpha_{0} &= \mathcal{H}_{2} - \mathcal{E}_{1} - \mathcal{E}_{2}, &\qquad  \alpha_{2} &= \mathcal{H}_{2} - \mathcal{E}_{3} - \mathcal{E}_{4},\\
			\alpha_{1} &= \mathcal{H}_{1} - \mathcal{E}_{5} - \mathcal{E}_{6}, &\qquad  \alpha_{3} &= \mathcal{H}_{1} - \mathcal{E}_{7} - \mathcal{E}_{8}.
			\\[5pt]
			\delta & = \mathrlap{\alpha_{0} + \alpha_{1} +  \alpha_{2} + \alpha_{3}.} 
			\end{alignedat}
\end{equation}
	\caption{The symmetry root basis for the standard $A_{3}^{(1)}$ symmetry sub-lattice}
	\label{fig:a-roots-a3}	
\end{figure}

Then after making some normalization choices, we can take the corresponding point configuration on $\mathbb{P}^1 \times \mathbb{P}^1$ with affine coordinates $(q,p)$, as
shown on Figure~\ref{fig:std-d5-sur} (this will be explained in detail in section \ref{sub:the_surface_family_for_the_d_pain_ii_dynamics}). 
Using the 
\emph{Period Map} $\chi: \operatorname{Span}\{\alpha_{i}\}\to \mathbb{C}$  defined through the standard symplectic form 
$\omega = dp\wedge dq$, we can introduce, for each symmetry root $\alpha_{i}$, 
a canonical parameter $a_{i} = \chi(\alpha_{i})$, known as the \emph{root variable}. These root variables satisfy the usual 
normalization condition $a_{0} + a_{1} + a_{2} + a_{3} = 1$ and parameterize the blowup points on 
Figure~\ref{fig:std-d5-sur} 
as follows:
\begin{equation*}
	p_{1}(\infty,-t)\leftarrow p_{2}(0,-a_{0})\quad p_{3}(\infty,0)\leftarrow p_{4}(0,-a_{2})\quad p_{5}(0,\infty)\leftarrow p_{6}(a_{1},0)\quad 
	p_{7}(1,\infty) \leftarrow p_{8}(a_{3},0)
\end{equation*}
where $t$ is an additional parameter, the notation for which reflects connections to differential Painlev\'e equations. 
This is the same parameterization of the point configuration as in section 8.2.18 of \cite{KajNouYam:2017:GAPE}. 
Now, allowing the root variables and parameter $t$ to vary, one obtains a family of surfaces $\mathcal{X}\ni X_{\boldsymbol{a}}$, parameterized by the root variables and the extra parameter: $\boldsymbol{a}=(a_0,a_1,a_2,a_3;t)$. 
The geometric realization of the surface family also carries data of the enumeration of the blowups in terms of the parameters, and this gives a natural identification of all $\operatorname{Pic}(X_{\boldsymbol{a}})$ into a single lattice which we denote $\operatorname{Pic}(\mathcal{X})$.

\begin{figure}[ht]
	\begin{tikzpicture}[>=stealth,basept/.style={circle, draw=red!100, fill=red!100, thick, inner sep=0pt,minimum size=1.2mm}]
	\begin{scope}[xshift=0cm,yshift=0cm]
	\draw [black, line width = 1pt] (-0.4,0) -- (2.9,0)	node [pos=0,left] {\small $H_{2}$} node [pos=1,right] {\small $p=0$};
	\draw [black, line width = 1pt] (-0.4,2.5) -- (2.9,2.5) node [pos=0,left] {\small $H_{2}$} node [pos=1,right] {\small $p=\infty$};
	\draw [black, line width = 1pt] (0,-0.4) -- (0,2.9) node [pos=0,below] {\small $H_{1}$} node [pos=1,above] {\small $q=0$};
	\draw [black, line width = 1pt] (2.5,-0.4) -- (2.5,2.9) node [pos=0,below] {\small $H_{1}$} node [pos=1,above] {\small $q=\infty$};
	\node (p3) at (2.5,0) [basept,label={[xshift = -8pt, yshift=-15pt] \small $p_{3}$}] {};
	\node (p4) at (3,0.5) [basept,label={[yshift=0pt] \small $p_{4}$}] {};
	\node (p1) at (2.5,1) [basept,label={[xshift = -8pt, yshift=-15pt] \small $p_{1}$}] {};
	\node (p2) at (3,1.5) [basept,label={[yshift=0pt] \small $p_{2}$}] {};
	\node (p5) at (0,2.5) [basept,label={[xshift = 8pt, yshift=0pt] \small $p_{5}$}] {};
	\node (p6) at (-.5,2) [basept,label={[yshift=-15pt] \small $p_{6}$}] {};
	\node (p7) at (1.5,2.5) [basept,label={[xshift = 8pt, yshift=0pt] \small $p_{7}$}] {};
	\node (p8) at (1,2) [basept,label={[yshift=-15pt] \small $p_{8}$}] {};
	\draw [red, line width = 0.8pt, ->] (p2) -- (p1);
	\draw [red, line width = 0.8pt, ->] (p4) -- (p3);
	\draw [red, line width = 0.8pt, ->] (p6) -- (p5);
	\draw [red, line width = 0.8pt, ->] (p8) -- (p7);	
	\end{scope}
	\draw [->] (6.5,1)--(4.5,1) node[pos=0.5, below] {$\operatorname{Bl}_{p_{1}\cdots p_{8}}$};
	\begin{scope}[xshift=9cm,yshift=0cm]
	\draw [red, line width = 1pt] (-0.4,0) -- (3.5,0)	node [pos=0, left] {\small $H_{2}-E_{3}$};
	\draw [red, line width = 1pt] (0,-0.4) -- (0,2.4) node [pos=0, below] {\small $H_{1}-E_{5}$};
	\draw [blue, line width = 1pt] (-0.2,1.8) -- (0.8,2.8) node [pos=0, left] {\small $E_{5}-E_{6}$};
	\draw [red, line width = 1pt] (-0.1,2.7) -- (0.4,2.2) node [pos=0, above] {\small $E_{6}$};
	\draw [blue, line width = 1pt] (1.2,1.8) -- (2.2,2.8) node [pos=0, xshift=-14pt, yshift=-5pt] {\small $E_{7}-E_{8}$};
	\draw [red, line width = 1pt] (1.6,2.4) -- (2.1,1.9) node [pos=1, below] {\small $E_{8}$};
	\draw [blue, line width = 1pt] (0.3,2.6) -- (4.2,2.6) node [pos=1,right] {\small $H_{2} - E_{5} - E_{7}$};
	\draw [blue, line width = 1pt] (3,-0.2) -- (4,0.8) node [pos=1,right] {\small $E_{3} - E_{4}$};
	\draw [red, line width = 1pt] (3.4,0.4) -- (3.9,-0.1) node [pos=1, below] {\small $E_{4}$};
	\draw [blue, line width = 1pt] (3.8,0.3) -- (3.8,3) node [pos=1, above] {\small $H_{1}-E_{1} - E_{3}$};	
	\draw [blue, line width = 1pt] (3,1) -- (4,2) node [pos=1,right] {\small $E_{1} - E_{2}$};
		\draw [red, line width = 1pt] (3.1,1.9) -- (3.6,1.4) node [pos=0, above] {\small $E_{2}$};
	\draw [red, line width = 1pt] (-0.4,1.2) -- (3.5,1.2)	node [pos=0, left] {\small $H_{2}-E_{1}$};
	\draw [red, line width = 1pt] (1.4,-0.4) -- (1.4,2.4) node [pos=0, below] {\small $H_{1}-E_{6}$};
	\end{scope}
	\end{tikzpicture}
	\caption{A standard realization of the $D_{5}^{(1)}$ surface. Here we use the actual divisors rather than their divisor classes. 
    For example, $H_{1} - E_{1} - E_{3}$ is the proper transform of the line $q=\infty$ under the blowup procedure, and it is the unique 
    effective divisor in the class $\delta_{2} = \mathcal{H}_{1} - \mathcal{E}_{1} - \mathcal{E}_{3}$, and so on}
	\label{fig:std-d5-sur}
\end{figure}
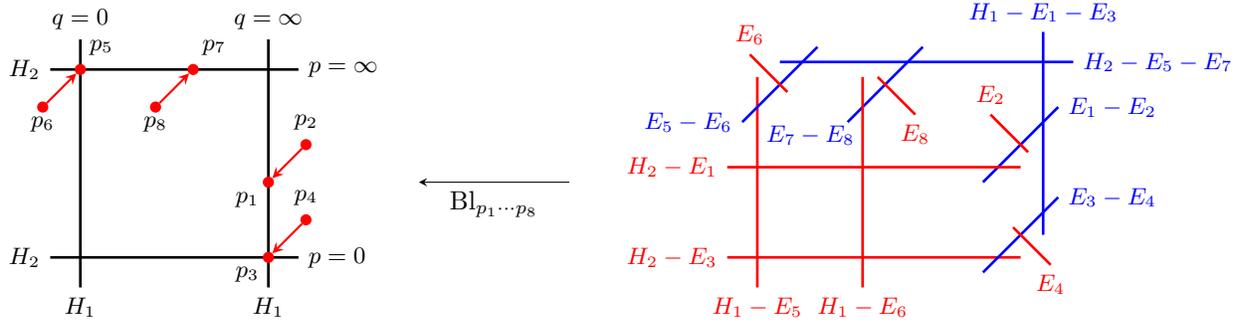


\subsection{Affine Weyl Symmetry Group} 
\label{sub:affine_weyl_symmetry_group}
We call elements $\alpha_{i}$ defined on 
Figure~\ref{fig:a-roots-a3} roots because they give a set of simple roots for the standard affine 
$A_{3}^{(1)}$ root system
in the space 
$V^{(1)}=\operatorname{Span}_{\mathbb{R}}\{\alpha_{0},\ldots,\alpha_{3}\}\subset \operatorname{Pic}^{\mathbb{R}}(\mathcal{X}):=\operatorname{Pic}(\mathcal{X})\otimes_{\mathbb{Z}} \mathbb{R}$ 
equipped with the symmetric bilinear form 
$(~\,,~)$ defined on the basis elements 
$\alpha_{i}$ in terms of the intersection product 
on $\operatorname{Pic}(\mathcal{X})$, 
$(v_1,v_2)=-v_1 \bullet v_2$. In particular, we get the standard affine Cartan matrix 
\begin{equation}\label{CA31}
   C\left(A_3^{(1)}\right)= \left(-\alpha_{i}\bullet\alpha_{j}\right) = 
   \left(
\begin{array}{cccc}
 2 & -1 & 0 & -1 \\
 -1 & 2 & -1 & 0 \\
 0 & -1 & 2 & -1 \\
 -1 & 0 & -1 & 2 \\
\end{array}
\right).
\end{equation}

The abstract affine Weyl group $W\left(A_{3}^{(1)}\right)$, defined in terms of generators 
$w_{i} = w_{\alpha_{i}}$ and relations that 
are encoded by the affine Dynkin diagram $A_{3}^{(1)}$,
\begin{equation}\label{DynA31}
	W\left(A_{3}^{(1)}\right) = W\left(\raisebox{-20pt}{\begin{tikzpicture}[
			elt/.style={circle,draw=black!100,thick, inner sep=0pt,minimum size=1.5mm}]
		\path 	(-0.5,0.5) 	node 	(a0) [elt, label={[xshift=-10pt, yshift = -10 pt] $\alpha_{0}$} ] {}
		        (-0.5,-0.5) node 	(a1) [elt, label={[xshift=-10pt, yshift = -10 pt] $\alpha_{1}$} ] {}
		        (0.5,-0.5) 	node 	(a2) [elt, label={[xshift=10pt, yshift = -10 pt] $\alpha_{2}$} ] {}
		        (0.5,0.5) 	node 	(a3) [elt, label={[xshift=10pt, yshift = -10 pt] $\alpha_{3}$} ] {};
		\draw [black,line width=1pt ] (a0) -- (a1) -- (a2)--(a3) -- (a0); 
	\end{tikzpicture}} \right)
	=
	\left\langle w_{0},\dots, w_{3}\ \left|\ 
	\begin{alignedat}{2}
    w_{i}^{2} = e,\quad  w_{i} w_{j} &= w_{j} w_{i}& &\text{ when 
   				\raisebox{-0.08in}{\begin{tikzpicture}[
   							elt/.style={circle,draw=black!100,thick, inner sep=0pt,minimum size=1.5mm}]
   						\path   ( 0,0) 	node  	(ai) [elt] {}
   						        ( 0.5,0) 	node  	(aj) [elt] {};
   						\draw [black] (ai)  (aj);
   							\node at ($(ai.south) + (0,-0.2)$) 	{$\alpha_{i}$};
   							\node at ($(aj.south) + (0,-0.2)$)  {$\alpha_{j}$};
   							\end{tikzpicture}}}\\
    w_{i} w_{j} w_{i} &= w_{j} w_{i} w_{j}& &\text{ when 
   				\raisebox{-0.17in}{\begin{tikzpicture}[
   							elt/.style={circle,draw=black!100,thick, inner sep=0pt,minimum size=1.5mm}]
   						\path   ( 0,0) 	node  	(ai) [elt] {}
   						        ( 0.5,0) 	node  	(aj) [elt] {};
   						\draw [black] (ai) -- (aj);
   							\node at ($(ai.south) + (0,-0.2)$) 	{$\alpha_{i}$};
   							\node at ($(aj.south) + (0,-0.2)$)  {$\alpha_{j}$};
   							\end{tikzpicture}}}
	\end{alignedat}\right.\right\rangle. 
\end{equation} 
can now be realized via \textit{reflections} in the 
roots $\alpha_{i}$, $w_{i} = r_{\alpha_{i}}$, which can be defined on the whole of $\operatorname{Pic}(\mathcal{X})$,
\begin{equation}\label{eq:root-refl}
	w_{i}(\mathcal{C}) = r_{\alpha_{i}}(\mathcal{C}) = \mathcal{C} - 2 
	\frac{\mathcal{C}\bullet \alpha_{i}}{\alpha_{i}\bullet \alpha_{i}}\alpha_{i}
	= \mathcal{C} + \left(\mathcal{C}\bullet \alpha_{i}\right) \alpha_{i},\qquad \mathcal{C}\in \operatorname{Pic}(\mathcal{X}).
\end{equation}

Then $\Delta^{(1)}=(\alpha_0,\alpha_1,\alpha_2,\alpha_3)$ is the
\emph{simple system} of the $A_3^{(1)}$ root system, reflections  $w_{i} = r_{\alpha_{i}}$ are called \emph{simple reflections}, and 
the affine $A_3^{(1)}$ root system is $\Phi^{(1)}=W^{(1)}\cdot\Delta^{(1)}$, where $W^{(1)}=W\left(A_3^{(1)}\right)$. We denote by $\Delta = (\alpha_1,\alpha_2,\alpha_3)$ the simple roots of the underlying $A_3$ root system
$\Phi=W\cdot\Delta$, where 
$W=W\left(A_3\right)=\langle w_{1},w_{2},w_{3}\rangle$. 

The anti-canonical divisor class $-\mathcal{K}_{\mathcal{X}}$ is in $V^{(1)}$ and, in fact, is the \emph{null root} $\delta$ of the $A_{3}^{(1)}$ root system,
\begin{equation}
-\mathcal{K}_{\mathcal{X}}=\delta=\alpha_0+\alpha_1+\alpha_2+\alpha_3=\alpha_0+\tilde{\alpha}    
\end{equation}
where $\tilde{\alpha}=\alpha_1+\alpha_2+\alpha_3$ is the highest root of the underlying $A_3$ root system.

Let $V = \operatorname{Span}_{\mathbb{R}}\{\alpha_{1},\alpha_{2},\alpha_{3}\}$. Recall that in the $A_{3}$ case, the fundamental weights $h_{i}\in V$ are 
given explicitly in terms of the simple roots by 
\begin{equation}\label{A3w}
    h_1 = \frac{3}{4} \alpha_1 + \frac{1}{2} \alpha_{2} + \frac{1}{4} \alpha_3, \quad
h_2 = \frac{1}{2} \alpha_1 +  \alpha_{2} + \frac{1}{2} \alpha_3, 
\quad
h_3 = \frac{1}{4} \alpha_1 + \frac{1}{2} \alpha_{2} + \frac{3}{4} \alpha_3.
\end{equation}
Then in $V$ we have two important lattices, the root lattice 
$Q=\operatorname{Span}_{\mathbb{Z}}\{\alpha_1,\alpha_2,\alpha_3\}$ and the weight lattice 
$P = \operatorname{Span}_{\mathbb{Z}}\{h_1,h_2,h_3\}$
of the finite $A_{3}$ system.

For any element 
$t\in \operatorname{Pic}^{\mathbb{R}}(\mathcal{X})$
we can define an \emph{associated translation} $\mathbf{T}_{t}$ 
by
\begin{equation}
    \mathbf{T}_{t}: \operatorname{Pic}^{\mathbb{R}}(\mathcal{X}) \to \operatorname{Pic}^{\mathbb{R}}(\mathcal{X}),\qquad
    \mathbf{T}_{t}(v) = v - (t\bullet v)\delta.
\end{equation}
These translations satisfy 
$\mathbf{T}_{t}\mathbf{T}_{t'} = \mathbf{T}_{t+t'}$
and, for any automorphism $w$ of $\operatorname{Pic}(\mathcal{X})$ preserving the intersection form and 
$\delta$, $\mathbf{T}_{w(t)}=w \mathbf{T}_{t} w^{-1}$.
When $t = \alpha\in Q$, this gives the usual translations 
$\mathbf{T}_{\alpha}(v) = v + (\alpha,v)\delta$
on the root lattice $Q$ and the standard fact is that 
\begin{equation}
    W\left(A_{3}^{(1)}\right) = W(A_{3}) \ltimes \mathbf{T}_{Q},\qquad \mathbf{T}_{Q} = \left\{\mathbf{T}_{\alpha}~|~\alpha \in Q\right\},
\end{equation}
and the semi-direct product structure is realized explicitly via
    $w_0 =  r_{\tilde{\alpha}} \mathbf{T}_{\tilde{\alpha}}$,
where the reflection corresponding to the highest root $\tilde{\alpha}$, written in terms of generators, is $r_{\tilde{\alpha}} = w_3 w_1 w_2 w_1 w_3$.

However if we take translations associated to $t=h\in P$, the same construction results in a larger group known as the \emph{extended} affine Weyl group. 
In the case of a finite crystallographic root system, it is known that the quotient $P/Q$ is a finite abelian group which corresponds to some but not necessarily all automorphisms of the affine Dynkin diagram, see \cite[VI]{bourbaki} for descriptions of these finite groups for the Dynkin diagrams of finite type root systems.

In our case, $\operatorname{Aut}\left(A_{3}^{(1)}\right)\simeq \mathbb{D}_{4}$, the dihedral group of order 8, which can be 
generated by reflections $\sigma_{1}$ and $\sigma_{2}$ shown on Figure~\ref{fig:a-roots-a3}, but it is convenient to include one more automorphism $\sigma_{3}$, see Figure~\ref{fig:a-roots-a3}. 
	These act on the symmetry and the surface root bases as permutations (here we use the standard cycle 
	notation):
	\begin{equation}
		\sigma_{1} = (\alpha_{0}\alpha_{3})(\alpha_{1}\alpha_{2})=
		(\delta_{0}\delta_{5})(\delta_{1}\delta_{4})(\delta_{2}\delta_{3})\qquad 
		\sigma_{2} = (\alpha_{0}\alpha_{2})=(\delta_{0}\delta_{1}) \qquad
		\sigma_{3} = (\alpha_{1}\alpha_{3})=(\delta_{4}\delta_{5}).
	\end{equation}
	They can also be represented as compositions of reflections (that are no longer in roots in the $A_3^{(1)}$ system but rather in the larger $E_8^{(1)}$ system containing it) when acting on the Picard lattice,
	\begin{equation}
			\sigma_{1} = (\mathcal{E}_{1}\mathcal{E}_{7})(\mathcal{E}_{2}\mathcal{E}_{8})(\mathcal{E}_{3}\mathcal{E}_{5})
			(\mathcal{E}_{4}\mathcal{E}_{6})w_{\mu},\qquad 
			\sigma_{2} = (\mathcal{E}_{1}\mathcal{E}_{3})(\mathcal{E}_{2}\mathcal{E}_{4}), \qquad
			\sigma_{3} = (\mathcal{E}_{5}\mathcal{E}_{7})(\mathcal{E}_{6}\mathcal{E}_{8}),			
	\end{equation}
	where $w_{\mu}$ is a reflection  \eqref{eq:root-refl} in $\mu = \mathcal{H}_{1} - \mathcal{H}_{2}$ 
	(note also that a transposition 
	$(\mathcal{E}_{i} \mathcal{E}_{j})$ is induced by a reflection in the root $\mathcal{E}_{i} - \mathcal{E}_{j}$).

The  automorphisms corresponding to $P/Q$
are only the rotations  $\Sigma= \langle \rho = \sigma_1 \sigma_2 \rangle \cong \mathbb{Z}_{4} 
\triangleleft\mathbb{D}_{4} \cong \operatorname{Aut}\left(A_3^{(1)}\right)$. 
Then we get the 
\emph{extended} affine Weyl group
\begin{equation} \label{extendedisoms1}
\widetilde{W}\left(A_3^{(1)}\right) = W\left(A_3^{(1)}\right)\rtimes \Sigma = W\left(A_3\right)\ltimes \mathbf{T}_{P},\qquad
\mathbf{T}_{P} = \left\{\mathbf{T}_{h}~|~h \in P\right\}.
\end{equation}
In the case at hand the equality in \eqref{extendedisoms1} is realised by 
\begin{equation}
        \rho   = \sigma_{1} \sigma_2 = w_{1} w_{2} w_{3} \mathbf{T}_{h_3},
        \quad 
        \rho^2 =  \sigma_{1} \sigma_2  \sigma_{1} \sigma_2 = w_{2} w_{3} w_{1} w_2 \mathbf{T}_{h_2} ,
        \quad
        \rho^3 =  \sigma_{2} \sigma_{1} 
        = w_{3} w_{2} w_{1} \mathbf{T}_{h_{1}},
\end{equation}
with the translations by fundamental weights $h_i$ acting by 
\begin{equation}
    \mathbf{T}_{h_i} (\alpha_0)= \alpha_0-\delta, \quad \mathbf{T}_{h_i}(\alpha_i) = \alpha_i +\delta, \quad \mathbf{T}_{h_i}(\alpha_j)=\alpha_j ~\text{ for }~ j \neq i.
\end{equation} 
This also gives us the expressions in terms of generators 
\begin{equation}
         \mathbf{T}_{h_1} = \rho^3 w_{2} w_{3} w_{0}, 
         \quad 
         \mathbf{T}_{h_2} = \rho^2 w_0 w_3 w_1 w_0, 
         \quad 
         \mathbf{T}_{h_3} = \rho w_2 w_1 w_0.
\end{equation}
At the same time,
for the geometric Painlev\'e theory 
we need to include all of the diagram automorphisms. 
To keep track of this distinction, we introduce the 
following terminology.
\begin{definition}\label{def:fully-extended-W}
The \emph{fully extended}
affine Weyl group (of type $A_{3}^{(1)}$)
is $\widehat{W}\left(A_{3}^{(1)}\right) := W\left(A_{3}^{(1)}\right) \rtimes \operatorname{Aut}\left(A_{3}^{(1)}\right)$.

\end{definition}

The semi-direct product structure of $\widehat{W}\left(A_3^{(1)}\right)$ is given by the action of $\sigma \in \operatorname{Aut}\left(A_{3}^{(1)}\right)$ on 
$W\left(A_{3}^{(1)}\right)$ via $w_{\sigma(\alpha_{i})} = \sigma w_{\alpha_{i}} \sigma^{-1}$.

The group $\widehat{W}\left(A_3^{(1)}\right)$ describes the symmetries of the surface family constructed in Section \ref{sec:D5-family}.
This is via an action of $\widehat{W}\left(A_3^{(1)}\right)$ on point configurations by elementary birational maps on $(q,p)$ and root variables $\boldsymbol{a}$ (which lift to isomorphisms $w : X_{\boldsymbol{a}} \rightarrow X_{w.\boldsymbol{a}}$, which can also be thought of as automorphisms $w: \mathcal{X}\to \mathcal{X}$ of the family of surfaces, which induce the linear actions of $w$ on $\operatorname{Pic}(\mathcal{X})$ by pullback or pushforward depending on convention).
This is known as a birational representation of $\widehat{W}\left(A_{3}^{(1)}\right)$, and the action of $\widehat{W}\left(A_3^{(1)}\right)$ by automorphisms of $\mathcal{X}$ is called the \emph{Cremona action} \cite{Sak:2001:RSAWARSGPE}. 
We describe this birational representation in the following Lemma \cite[Section A.3]{CheDzhHu:2020:PLUEDPE}.

\begin{lemma}\label{thm:bir-weyl-d5} The birational representation of $\widehat{W}\left(A_{3}^{(1)}\right)$, written in the affine 
	$(q,p)$-chart and the root variables $a_{i}$, is the following.  
	Reflections $w_{i}$ on $\operatorname{Pic}(\mathcal{X})$ are induced by the elementary 
	birational mappings, also denoted by $w_{i}$,
	\begin{alignat}{2}
		w_{0}&: 
		\left(\begin{matrix} a_{0} & a_{1} \\ a_{2} & a_{3} \end{matrix}\ ;\ t\ ;
		\begin{matrix} q \\ p \end{matrix}\right) 
		&&\mapsto 
		\left(\begin{matrix} -a_{0} & a_{0} + a_{1} \\ a_{2} & a_{0} + a_{3} \end{matrix}\ ;\ t\ ;
		\begin{matrix} \displaystyle q + \frac{a_{0}}{p + t} \\ p \end{matrix}\right), \\
		w_{1}&: \left(\begin{matrix} a_{0} & a_{1} \\ a_{2} & a_{3} \end{matrix}\ ;\ t\ ;
		\begin{matrix} q \\ p \end{matrix}\right)
		&&\mapsto 
		\left(\begin{matrix} a_{0} + a_{1} & -a_{1} \\ a_{1} + a_{2} & a_{3} \end{matrix}\ ;\ t\ ;
		\begin{matrix}  q \\ \displaystyle p - \frac{a_{1}}{q} \end{matrix}\right), \\
		w_{2}&: 
		\left(\begin{matrix} a_{0} & a_{1} \\ a_{2} & a_{3} \end{matrix}\ ;\ t\ ;
		\begin{matrix} q \\ p \end{matrix}\right) 
		&&\mapsto 
		\left(\begin{matrix} a_{0} & a_{1} + a_{2} \\ -a_{2} & a_{2} + a_{3} \end{matrix}\ ;\ t\ ;
		\begin{matrix} \displaystyle q + \frac{a_{2}}{p}\\ p \end{matrix}\right), \\
		w_{3}&: 
		\left(\begin{matrix} a_{0} & a_{1} \\ a_{2} & a_{3} \end{matrix}\ ;\ t\ ;
		\begin{matrix} q \\ p \end{matrix}\right) 
		&&\mapsto 
		\left(\begin{matrix} a_{0}+a_{3} & a_{1} \\ a_{2}+a_{3} & -a_{3} \end{matrix}\ ;\ t\ ;
		\begin{matrix} q \\ \displaystyle  p - \frac{a_{3}}{q-1} \end{matrix}\right).
	\end{alignat}	
    Note that the parameter $t$ can also change when we consider the Dynkin diagram automorphisms, so it is convenient to include it 
   	among the root variables.
	The actions of the generators $\sigma_1,\sigma_2$ of $\operatorname{Aut}\left(A_3^{(1)}\right)$, as well as $\sigma_3=\sigma_1\sigma_2\sigma_1$, are given by the following birational mappings:
	\begin{alignat}{2}
		\sigma_{1}&: 
		\left(\begin{matrix} a_{0} & a_{1} \\ a_{2} & a_{3} \end{matrix}\ ;\ t\ ;
		\begin{matrix} q \\ p \end{matrix}\right) 
		&&\mapsto 
		\left(\begin{matrix} a_{3} & a_{2} \\ a_{1} & a_{0}  \end{matrix}\ ;\ -t\ ;
		\begin{matrix} \displaystyle -\frac{p}{ t} \\ q t \end{matrix}\right), \\
		\sigma_{2}&: 
		\left(\begin{matrix} a_{0} & a_{1} \\ a_{2} & a_{3} \end{matrix}\ ;\ t\ ;
		\begin{matrix} q \\ p \end{matrix}\right) 
		&&\mapsto 
		\left(\begin{matrix} a_{2} & a_{1} \\ a_{0} & a_{3}  \end{matrix}\ ;\ -t\ ;
		\begin{matrix} q \\ p + t \end{matrix}\right), \\
		\sigma_{3}&: 
		\left(\begin{matrix} a_{0} & a_{1} \\ a_{2} & a_{3} \end{matrix}\ ;\ t\ ;
		\begin{matrix} q \\ p \end{matrix}\right) 
		&&\mapsto 
		\left(\begin{matrix} a_{0} & a_{3} \\ a_{2} & a_{1}  \end{matrix}\ ;\ -t\ ;
		\begin{matrix} 1-q \\ -p  \end{matrix}\right).
	\end{alignat}

\end{lemma}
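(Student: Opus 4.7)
The plan is to verify each of the stated formulas by direct computation, checking that it defines a well-defined self-map on the family of eight-point configurations and that its induced linear action on $\operatorname{Pic}(\mathcal{X})$ agrees with the abstract reflection or permutation prescribed by $\widehat{W}(A_3^{(1)})$ acting on the basis shown in Figure~\ref{fig:a-roots-a3}.

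For each simple reflection $w_i$, the verification breaks into three steps. First, I would check that the coordinate transformation is an involution and that the root-variable transformation $a_i \mapsto -a_i$, $a_j \mapsto a_j + (-\alpha_j\bullet\alpha_i)\, a_i$ for $j\neq i$, reproduces the reflection $r_{\alpha_i}$ under the period-map identification $\chi \leftrightarrow \sum a_j\alpha_j$. Second, I would verify that the map sends the eight-point configuration parameterized by $\boldsymbol{a}$ to that parameterized by $w_i.\boldsymbol{a}$; for example, for $w_0$ the coordinate shift $q\mapsto q + a_0/(p+t)$ must send the infinitely-near pair $(p_1,p_2)$ lying above $p=-t$ to the corresponding pair for the sign-flipped parameter $-a_0$, and similarly for the other $w_i$. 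Third, the induced action on $\operatorname{Pic}(\mathcal{X})$ is obtained by computing proper transforms of the generators $\mathcal{H}_1,\mathcal{H}_2,\mathcal{E}_1,\ldots,\mathcal{E}_8$, and one checks that the result agrees with the reflection formula $w_i(\mathcal{C}) = \mathcal{C} + (\mathcal{C}\bullet\alpha_i)\alpha_i$.

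For the Dynkin-diagram automorphisms $\sigma_j$, the argument is analogous. The map $\sigma_1$ essentially swaps the two $\mathbb{P}^1$ factors (up to the rescaling by $t$), which on the level of $\operatorname{Pic}(\mathcal{X})$ accounts for the extra reflection $w_\mu$ in $\mu = \mathcal{H}_1 - \mathcal{H}_2$ that appears in its factorization. The automorphism $\sigma_2$ is the shift $p\mapsto p+t$ combined with $t\mapsto -t$, corresponding to exchanging the pairs of blowup centers $(p_1,p_2) \leftrightarrow (p_3,p_4)$. Then $\sigma_3$ can be checked either directly or through the factorization $\sigma_3 = \sigma_1 \sigma_2 \sigma_1$ once the other two have been established.

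The main obstacle is the proper-transform bookkeeping: tracking how the infinitely-near points $p_2, p_4, p_6, p_8$ behave under each map requires working in suitable local charts on the first blowup, and this accounts for most of the computational effort. Once that is done, compatibility with the Cartan data is essentially forced, since the formulas have been designed precisely so that the period map $\chi$ intertwines the coordinate action with the chosen reflection or permutation of the root basis.
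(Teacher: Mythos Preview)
Your proposal is correct and follows the standard verification route; the paper itself does not supply a proof of this lemma but simply quotes the result from \cite[Section A.3]{CheDzhHu:2020:PLUEDPE}, so there is nothing to compare against beyond noting that your direct-computation approach is exactly what that reference does.
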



\subsection{Examples of Discrete Painlev\'e Equations} 
\label{sub:std-examples-d5}
There are two standard examples of discrete Painlev\'e equations on this surface family. 
The first one is \cite[(8.23)]{KajNouYam:2017:GAPE}
and it corresponds to the translation $\mathbf{T}_{h_{1}-h_{2}+h_{3}}$, i.e., 
its action on the symmetry roots by pushforward is given by 
\begin{equation}\label{eq:dPD5-transl-KNY}
	\psi_{*}: (\alpha_{0}, \alpha_{1}, \alpha_{2}, \alpha_{3})
	\mapsto (\alpha_{0}, \alpha_{1}, \alpha_{2}, \alpha_{3})
    + ( -1,1,-1,1 ) \delta,
\end{equation}
This equation can be written as $\psi : (q,p)\mapsto(\bar{q},\bar{p})$, where
\begin{equation}\label{eq:dPD5-KNY}
	\overline{q} + q = 1 - \frac{a_{2}}{p} - \frac{a_{0}}{p+t},\qquad 
	p + \underline{p} = -t + \frac{a_{1}}{q} + \frac{a_{3}}{q-1},
\end{equation}
which gives an isomorphism $\psi: X_{\boldsymbol{a}}\rightarrow X_{\bar{\boldsymbol{a}}}$, where the root variable evolution and normalization are given by 
\begin{equation}\label{eq:dPD5-rv-evol}
	\overline{a}_{0} = a_{0} + 1, \quad \overline{a}_{1} = a_{1}-1, \quad \overline{a}_{2} = a_{2} + 1,\quad \overline{a}_{3} = a_{3} - 1,\qquad
	 a_{0} + a_{1} + a_{2}  + a_{3} = 1.
\end{equation}
Note comparing \eqref{eq:dPD5-transl-KNY} and \eqref{eq:dPD5-rv-evol} that there is a correspondence between actions on root variables and simple roots, but the action by pushforward of the map on simple roots is inverse to the evolution of root variables, which is explained in terms of the definition of the period map in, for example, \cite{CarDzhTak:2017:FDIMDPE}. 
As is often the case, equations \eqref{eq:dPD5-KNY} naturally define two \emph{half-maps}, $\psi_{1}: (q,p)\to(\overline{q},-p)$ and 
$\psi_{2}:(q,p)\to(q,-\underline{p})$ (the additional negative sign here is related to the M\"obius group gauge action) and the full
mapping $\psi:(q,p)\mapsto (\overline{q},\overline{p})$ decomposes as $\psi = (\overline{\psi_{2}})^{-1}\circ \psi_{1}$, where $\overline{\psi}_2$ is $\psi_2$ above with root variables evolved according to \eqref{eq:dPD5-rv-evol}.

In terms of the action of generators of $\widehat{W}\left(A_{3}^{(1)}\right)$ on $(q,p)$ and root variables as in Lemma \ref{thm:bir-weyl-d5}, these mappings can be decomposed as 
\begin{equation}\label{eq:dPD5-decomp-KNY}
	\psi = \sigma_{3}\sigma_{2} w_{3} w_{1} w_{2} w_{0},\qquad \psi_{1} = \sigma_{3}w_{2}w_{0},\quad \psi_{2} = \sigma_{2}w_{3}w_{1}.
\end{equation}

The second example, called a d-$\Pain{IV}$ equation in \cite{Sak:2001:RSAWARSGPE}, is the mapping 
$\eta:(f,g)\to(\overline{f},\overline{g})$ that corresponds via pushforward to the translation $\mathbf{T}_{h_{3}}$. 
It is written in the \emph{multiplicative-additive} form
\begin{equation}\label{eq:dPD5-Sakai}
	\overline{f} f = \frac{s \overline{g}}{(\overline{g} - a_{3} + \lambda)(\overline{g} + a_{0} + \lambda)},\qquad 
	\overline{g} + g = \frac{s}{f} + \frac{a_{1}+ a_{0}}{1 - f} - \lambda + a_{3} - a_{0},
\end{equation}
where $\lambda = a_{0} + a_{1} + a_{2} + a_{3}$ (without loss of generality it can be normalized to $\lambda = 1$) 
the root variable evolution is given by $\overline{a}_{0} = a_{0} + \lambda$ and $\overline{a}_{3} = a_{3} - \lambda$, the action on 
the symmetry roots is 
\begin{equation}\label{eq:dPD5-transl-Sakai}
	\eta_{*}: ( \alpha_{0}, \alpha_{1}, \alpha_{2}, \alpha_{3} )
	\mapsto ( \alpha_{0}, \alpha_{1}, \alpha_{2}, \alpha_{3} ) + 
    ( -1,0,0,1)\delta,
\end{equation}
and the decomposition of the mapping in terms of the generators is 
$\eta = \sigma_{3}\sigma_{1}w_{2}w_{1}w_{0}$. 

Using Equations \eqref{CA31} and \eqref{A3w}, one
sees that $|h_{1}-h_{2}+h_{3}|^2=|h_{2}|^2=1$, while $|h_{3}|^2=3/4$, and therefore the associated translations $T_{h_{1}-h_{2}+h_{3}}$ and $T_{h_{3}}$ are not conjugate in $\widehat{W}\left(A_3^{(1)}\right)$ (since $\mathbf{T}_{w(h)}=w \mathbf{T}_{h} w^{-1}$ for any $h\in P$ and $w\in \widehat{W}\left(A_3^{(1)}\right)$ and $w$ preserves the intersection form). Hence $\psi$ and $\eta$ are not related under conjugation by any element of  $\widehat{W}\left(A_{3}^{(1)}\right)$.
So, the corresponding equations are not equivalent under change of variables corresponding to conjugation by any of the birational mappings in Lemma \ref{thm:bir-weyl-d5}.
Furthermore, 
equations \eqref{eq:dPD5-Sakai} correspond to 
a different geometric realization \eqref{eq:d5-surf-roots-alt},  but since our example is related to \eqref{eq:dPD5-transl-KNY},
we do not go into details here, see \cite[Appendix]{CheDzhHu:2020:PLUEDPE}.



\section{d-$\Pain{II}$ Equation} 
\label{sec:dP-II}

Let us now consider the d-$\Pain{II}$ equation \eqref{eq:dP-II-eq}.
First we need to show that the dynamics is indeed regularized on a family of $D_{5}^{(1)}$ surfaces. 
This is a standard computation that we only outline here, for a detailed description 
see, e.g., \cite{DzhFilSto:2020:RCDOPWHWDPE}.

\subsection{The Surface Family for the d-$\Pain{II}$ Dynamics} 
\label{sub:the_surface_family_for_the_d_pain_ii_dynamics}
We first rewrite \eqref{eq:dP-II-eq}  as a system,
\begin{equation}\label{eq:dP-II-sys}
	\left\{
	\begin{aligned}
	y_{n} &= x_{n+1}\\
	y_{n} + x_{n-1} &= \frac{(\alpha n + \beta)x_{n} + \gamma}{1 - x_{n}^{2}}		
	\end{aligned}
	\right. ,
\end{equation}
and then as a mapping (the parameter $\alpha$ should not be confused with a symmetry root, but the notation is traditional and the 
context makes it clear). Using the standard notation $\overline{x}:=x_{n+1}$, $\underline{x} := x_{n-1}$ and the same for $y$, 
the \emph{forward mapping} is given by 
\begin{equation}\label{eq:dP-II-fs}
	\varphi:(x,y)\mapsto (\overline{x},\overline{y}) = \left(y, \frac{(\alpha (n+1) + \beta) y + \gamma}{1 - y^{2}} - x \right)
\end{equation}
and the \emph{backward mapping} is
\begin{equation}\label{eq:dP-II-bs}
	\varphi^{-1}:(x,y)\mapsto (\underline{x},\underline{y}) = 
		\left( \frac{(\alpha n + \beta)x + \gamma}{1 - x^{2}} - y ,x\right).
\end{equation} 
Note that these mappings indeed define an additive non-autonomous discrete dynamics, since the coefficients in the mapping are 
(affine) functions of the time step $n$.

Next, extend the dynamics from $\mathbb{C}^{2}$ to $\mathbb{P}^{1} \times \mathbb{P}^{1}$ by introducing, in addition to the affine chart 
$(x,y)$, three more charts $(X,y)$, $(x,Y)$, and $(X,Y)$, where $X = 1/x$ and $Y = 1/y$. 
We then see the appearance of \emph{base points} where both the numerator and the denominator of the rational mapping vanish
simultaneously. For example, the forward mapping \eqref{eq:dP-II-fs}, when written in the $(X,y)$-chart in the domain, becomes
\begin{equation*}
	\varphi(X,y) = (\bar{x},\bar{y}) = \left(y, \frac{(\alpha (n+1) + \beta) X y  + \gamma X + y^{2}-1}{X(1 - y^{2})}\right)
\end{equation*}
and we immediately see the base points $(0,\pm1)$ in that chart. 

These indeterminacies of the mapping are resolved using the blowup procedure which, on the coordinate level, is just a change 
of variables. E.g., blowing up a point $q(x_{0},y_{0})$ introduces two charts $(u,v)$ and $(U,V)$ near this point via
\begin{equation*}
	x = x_{0} + u = x_{0} + UV,\qquad y = y_{0} + uv = y_{0} + V,
\end{equation*} 
where the coordinates $v = (y-y_{0})/(x-x_0)$ and $U = (x-x_{0})/(y-y_{0})$ are the \emph{slope} 
coordinates on the $\mathbb{P}^{1}$-``line 
of slopes'' or the exceptional divisor $E$ that we ``bubble'' at the point of the blowup. We then extend this algebraic 
mapping to the new chart, see if there are more base points that we need to blow up and continue this process 
until the mapping becomes an isomorphism after a finite number of blowups, which is always the case 
for the discrete Painlev\'e dynamics. 
In our example, the full set of the base points consists of the following four \emph{cascades} of infinitely near points:
\begin{equation}\label{eq:dP-II-points}
	\begin{aligned}
	q_{1}(x=\infty,y=-1)&\leftarrow q_{2}\left(u_{1}=0,v_{1}=\frac{\gamma-\alpha(n+1) - \beta}{2}\right)	\\
	q_{3}(x=\infty,y=1)&\leftarrow q_{4}\left(u_{3}=0,v_{3}=\frac{-\gamma-\alpha(n+1) - \beta}{2}\right)	\\
	q_{5}(x=-1,y=\infty)&\leftarrow q_{6}\left(U_{5}=\frac{\gamma-\alpha n - \beta}{2},V_{5}=0\right)	\\
	q_{7}(x=1,y=\infty)&\leftarrow q_{8}\left(U_{7}=\frac{-\gamma-\alpha n - \beta}{2},V_{7}=0\right).		
	\end{aligned}
\end{equation}

We see that, up to linear change of variables and parameter matching,
\begin{equation}\label{eq:var-chande-dP2}
	x = 2q - 1,\qquad y = \frac{\alpha}{2}p + 1,\qquad 
    \alpha = \frac{4}{t},
\end{equation}
the point configuration is exactly the one shown on Figure~\ref{fig:std-d5-sur}.
This also gives us the root variables in terms of the parameters $\alpha$, $\beta$, $\gamma$, and $n$:
\begin{equation}\label{eq:roots-vars-dp2}
	\begin{aligned}
	a_{0}&=\frac{n+1}{2} + \frac{\beta-\gamma}{2 \alpha}, &\qquad a_{2} &= \frac{n+1}{2} + \frac{\beta+\gamma}{2 \alpha} \\
	a_{1}&=-\frac{n}{2} - \frac{\beta-\gamma}{2 \alpha}, &\qquad a_{3} &= -\frac{n}{2} - \frac{\beta+\gamma}{2 \alpha}.
	\end{aligned}
\end{equation}
Note that in this case there are some relations among the root variables $a_i$.


\subsection{Dynamics on the Picard Lattice} 
\label{sub:dynamics_on_the_picard_lattice}
Let us now understand how the d-$\Pain{II}$ dynamics relates to the examples of standard discrete Painlev\'e equations considered in Section~\ref{sub:std-examples-d5}.
With the change of variables \eqref{eq:var-chande-dP2} and identification of parameters \eqref{eq:roots-vars-dp2}, we can consider the d-$\Pain{II}$ dynamics on (possibly, a proper sub-family of) the configuration space $\mathcal{X}$ from the geometric realization above.
Direct computation shows that the mapping $\varphi$ induces the linear map $\varphi_{*}:\operatorname{Pic}(\mathcal{X})\to \operatorname{Pic}(\mathcal{X})$ given by
\begin{alignat*}{2}
	\mathcal{H}_{1}& \mapsto 2 {\mathcal{H}}_{1} +  {\mathcal{H}}_{2} - {\mathcal{E}}_{5} - 
	 {\mathcal{E}}_{6} - {\mathcal{E}}_{7} - {\mathcal{E}}_{8}, \qquad&
	\mathcal{H}_{2}& \mapsto {\mathcal{H}}_{1},  \\
	\mathcal{E}_{1} &\mapsto {\mathcal{H}}_{1}  -{\mathcal{E}}_{6}, \qquad&
	\mathcal{E}_{5} &\mapsto {\mathcal{E}}_{3}\\
	\mathcal{E}_{2} &\mapsto {\mathcal{H}}_{1} - {\mathcal{E}}_{5},\qquad&
	\mathcal{E}_{6} &\mapsto {\mathcal{E}}_{4}\\
	\mathcal{E}_{3} &\mapsto {\mathcal{H}}_{1}  - {\mathcal{E}}_{8},\qquad&
	\mathcal{E}_{7} &\mapsto {\mathcal{E}}_{1}, \\
	\mathcal{E}_{4} &\mapsto {\mathcal{H}}_{1}  - {\mathcal{E}}_{7}, \qquad&
	\mathcal{E}_{8} &\mapsto {\mathcal{E}}_{2}.
\end{alignat*}
Hence we get the following action on the symmetry roots and the root variables (which is induced by the pull-back and so is inverse) 
\begin{align}
\varphi_{*}(  \alpha_{0}, \alpha_{1}, \alpha_{2}, \alpha_{3}  ) &=  
			( -\alpha_{1}, \alpha_{1} + \alpha_{2} + \alpha_{3} = \delta - \alpha_{0}, -\alpha_{3}, 
			\alpha_{0} + \alpha_{1} + \alpha_{3} = \delta - \alpha_{2}) \label{eq:dP-a-root-evol}	\\
			\overline{a}_{0} &= 1 - a_{1},\qquad \overline{a}_{1} = -a_{0},\qquad \overline{a}_{2} = 1 - a_{3}, \qquad \overline{a}_{3} = - a_{2}.	\label{eq:dP-a-pars-evol}		
\end{align}
This action therefore is \emph{not a \textbf{translation}} element of $\widehat{W}\left(A_{3}^{(1)}\right)$. It is, however, a \emph{quasi-translation}. Indeed,
it is a half of the standard translation \eqref{eq:dPD5-transl-KNY}:  $\varphi_{*}^{2} = \psi_{*}$. This can be seen either directly via the action on the symmetry roots, 
or by decomposing $\varphi_{*}$ in terms of the generators of the symmetry group,

\begin{equation}\label{eq:abs-dP-II}
	\varphi_{*} = \sigma_{1}\sigma_{2}w_{2}w_{0},
\end{equation}
and computing using the relations between generators of 
$\widehat{W}\left(A_{3}^{(1)}\right)$,
\begin{equation}\label{eq:abs-dP-II2}
	\varphi_{*}^{2} = \sigma_{1}\sigma_{2}w_{2}w_{0}\sigma_{1}\sigma_{2}w_{2}w_{0} = \sigma_{1}\sigma_{2}\sigma_{1}w_{1}w_{3}\sigma_{2}w_{2}w_{0}
	= \sigma_{3} \sigma_{2} w_{1} w_{3} w_{2} w_{0} = \sigma_{3} \sigma_{2} w_{3} w_{1} w_{2} w_{0} = \psi_{*}.
\end{equation}
\begin{definition}
	We call the equivalence class of \eqref{eq:abs-dP-II} in $\widehat{W}\left(A_{3}^{(1)}\right)$ an \emph{abstract d-$\Pain{II}$ equation}.
\end{definition}

Thus, the mapping $\varphi$ does not correspond to a non-autonomous additive difference equation, in the sense that the coefficients in the mapping cannot be written as affine functions of $n$. Indeed, the resulting equation, written for generic 
parameters $a_{0},\ldots,a_{3}$ constrained by $a_{0} + \cdots + a_{3} = 1$ becomes
\begin{equation}\label{eq:dp2-map-gen-pars}
	\varphi: 	 
		\left(\begin{matrix} a_{0} & a_{1} \\ a_{2} & a_{3} \end{matrix}\ ;\ \alpha\ ;
		\begin{matrix} x \\ y \end{matrix}\right) 
		\mapsto 
		\left(\begin{matrix} 1-a_{1} & -a_{0} \\ 1-a_{3} & -a_{2}  \end{matrix}\ ;\ \alpha\ ;
		\begin{matrix} \displaystyle y \\[3pt] -x - \displaystyle{\frac{\alpha a_{0}}{y+1}} - \frac{\alpha a_{2}}{y-1}\end{matrix}\right) .
\end{equation} 
If we regard the evolution of root variables under $\varphi$ as in \eqref{eq:dp2-map-gen-pars} as a system of difference equations for $a_0(n),a_1(n),a_2(n),a_3(n)$, its general solution is given by 
\begin{equation} \label{ndeprootvarsgeneric} 
    \begin{aligned}
        a_0(n) &=  \frac{n}{2} + \frac{(-1)^n-1}{4}+\frac{(-1)^n+1}{2} C_1 +\frac{(-1)^n-1}{2} C_2 
        = \begin{cases}
            \frac{n}{2} + C_1 & \text{if } n \text{ even}\\
            \frac{n-1}{2} - C_2 & \text{if } n \text{ odd}
        \end{cases},\\
        a_1(n) &= - \frac{n}{2} + \frac{(-1)^n+3}{4} +\frac{  (-1)^n-1}{2} C_1 +\frac{(-1)^n+1}{2} C_2 
        = \begin{cases}
            -\frac{n}{2} +1 + C_2 & \text{if } n \text{ even}\\
            -\frac{n-1}{2} - C_1 & \text{if } n \text{ odd}
        \end{cases},\\
        a_2(n) &= \frac{n}{2} +\frac{(-1)^n-1}{4} +\frac{(-1)^n+1}{2} C_3 +\frac{(-1)^n-1}{2} C_4 
        = \begin{cases}
            \frac{n}{2}  + C_3 & \text{if } n \text{ even}\\
            \frac{n-1}{2} - C_4 & \text{if } n \text{ odd}
        \end{cases}, \\
        a_3(n) &= - \frac{n}{2} + \frac{(-1)^n+3}{4} +\frac{ (-1)^n-1}{2} C_3 +\frac{(-1)^n+1}{2} C_4 
        = \begin{cases}
            -\frac{n}{2} +1 + C_4 & \text{if } n \text{ even}\\
            -\frac{n-1}{2} - C_3 & \text{if } n \text{ odd}
        \end{cases},
    \end{aligned}
\end{equation}
where $C_1,\dots,C_4$ are constants, subject to $C_1+C_2+C_3+C_4+1=0$ if we assume the normalization $a_{0} + \cdots + a_{3} = 1$.   
The dynamics defined by $\varphi$ in \eqref{eq:dp2-map-gen-pars} can then be written as the equation  
\begin{equation}\label{eq:dp2-eqn-gen-pars}
		x_{n+1} + x_{n-1} = \frac{\alpha ((a_{1}(n) + a_{3}(n))x - (a_{1}(n) - a_{3}(n)))}{x_n^{2}-1},	
\end{equation} 
with $a_i(n)$ given by \eqref{ndeprootvarsgeneric} and we see that the coefficients of the equation are no longer affine functions of $n$. 
The fact that, using a quasi-translation on the full surface family, the root variables allow one to still write down an equation with coefficients being explicit functions of $n$ was the point of the paper \cite{fullparameter}.

However, looking at the expressions of the root variables \eqref{eq:roots-vars-dp2} from the actual d-$\Pain{II}$ equation we observe that, independent of the parameter values, 
the root variables satisfy the constraint
\begin{equation}\label{eq:dp2-par-constr}
		a_{0} + a_{1} = a_{2} + a_{3} = \frac{1}{2}. 
\end{equation}
Restricting to the sub-locus of the surfaces in the whole family whose parameters satisfy this condition recovers the translational nature
of the dynamics. Indeed,
\begin{equation}\label{eq:par-consrt-evolve}
	\overline{a}_{0} = 1 - a_{1} = a_{0} + \frac{1}{2},\quad \overline{a}_{1} =  - a_{0} = a_{1} - \frac{1}{2},\quad 
	\overline{a}_{2} = 1 - a_{3} = a_{2} + \frac{1}{2},\quad \overline{a}_{3} = - a_{2} = a_{3} - \frac{1}{2},
\end{equation}
and so $\overline{a_{1} + a_{3}} = a_{1} + a_{3} - 1$, $\overline{a_{1} - a_{3}} = a_{1} - a_{3}$. Thus, $a_{1} + a_{3}$ is a linear and $a_{1} - a_{3}$ is a constant
function of $n$, exactly as in \eqref{eq:dP-II-eq}.

Since the evolution of the root variables in \eqref{eq:dp2-map-gen-pars} decouples into $(a_{0},a_{1})$ and $(a_{2},a_{3})$ pairs, we can 
visualize what happens by looking at them individually. On Figure~\ref{fig:par-dyn} we show the evolution of $(a_{0},a_{1})$ for
generic pair of parameters (left) and for a pair satisfying the constraint $a_{0} + a_{1} = \frac{1}{2}$ (right), cf \cite{KajNakTsu:2011:PRDPSTA}.

\begin{figure}[h]
\begin{tikzpicture}[basept/.style={circle, draw=red!100, fill=red!100, thick, inner sep=0pt, minimum size=1.2mm}]
\begin{axis}[width=3in, height=3in,
  xlabel={$a_{0}$},	
  ylabel={$a_{1}$},	
    ylabel near ticks,
  xmin=0, xmax = 6,
  ymin=-5, ymax=1,  
  axis equal,
  xtick={0,1,...,6},
  ytick={-5,-4,...,1},
  grid=both,
  grid style={line width=.4pt, draw=gray, dotted},
  major grid style={gray,line width=0.6, solid},
  minor tick num=4,
  title={ $a_{0} +a_{1} \neq \frac{1}{2}$},
]
 \addplot+ [sharp plot, thick, color=teal, mark = *, mark options={fill=teal}
    ] coordinates {
       (-0.6,0.8) (0.2,0.6) (0.4,-0.2) (1.2,-0.4) (1.4, -1.2) (2.2, -1.4) (2.4, -2.2) (3.2, -2.4) (3.4, -3.2) (4.2, -3.4) (4.4, -4.2) (5.2, -4.4) (5.4,-5.2)
    };
  \addplot [thick,  red, decoration={markings, mark=between positions 0.155 and 1 step 0.167 with {\arrow[red]{stealth}}},
    postaction=decorate]
          coordinates { (-0.6,0.8) (5.4, -5.2)};		
  \addplot [thick, orange, decoration={markings, mark=between positions 0.132 and 1 step 0.143 with {\arrow[orange]{stealth}}},
    postaction=decorate]
          coordinates { (-0.8,1.6) (6.2, -5.4)};		
\end{axis}
\end{tikzpicture} \hfill
\begin{tikzpicture}[basept/.style={circle, draw=red!100, fill=red!100, thick, inner sep=0pt, minimum size=1.2mm}]
\begin{axis}[width=3in, height=3in,
  xlabel={$a_{0}$},	
  ylabel={$a_{1}$},	
  ylabel near ticks,
  xmin=0, xmax = 6,
  ymin=-5, ymax=1,  
  axis equal,
  xtick={0,1,...,6},
  ytick={-5,-4,...,1},
  grid=both,
  grid style={line width=.4pt, draw=gray, dotted},
  major grid style={gray,line width=0.6, solid},
  minor tick num=4,
  title={ $a_{0} +a_{1} = \frac{1}{2}$},
]
 \addplot+ [sharp plot, thick, color=teal, mark = *, mark options={fill=teal}
    ] coordinates {
	 (-0.6, 1.1) (-0.1, 0.6) (0.4, 0.1) (0.9, -0.4) (1.4, -0.9) (1.9,  -1.4) (2.4, -1.9) (2.9, -2.4) (3.4, -2.9) (3.9, -3.4) (4.4,-3.9) 
	 (4.9, -4.4) (5.4, -4.9) (5.9, -5.4)
    };
  \addplot [thick, teal, decoration={markings, mark=between positions 0.068 and 1 step 0.077 with {\arrow[teal]{stealth}}},
    postaction=decorate]
          coordinates { (-0.6, 1.1) (5.9, -5.4)};		
\end{axis}
\end{tikzpicture}	
\caption{Parameter dynamics in the $(a_0,a_1)$ plane}
\label{fig:par-dyn}
\end{figure}
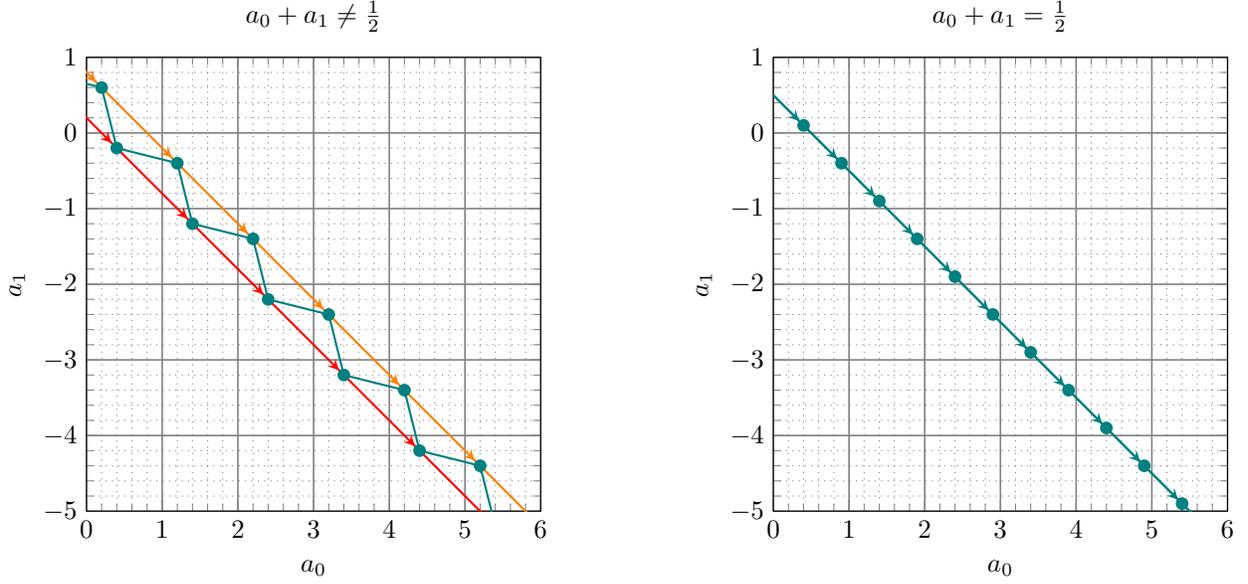
Thus, we are interested in the following question:
\begin{quote}
	What is the symmetry group of the sub-family of the full $D_{5}^{(1)}$-surface family that corresponds to the parameter 
	constraint \eqref{eq:dp2-par-constr}? Is it again an extended affine Weyl group? Does it generate the dynamics \eqref{eq:dP-II-fs}?
\end{quote}

We consider this question in the next sub-section. 


\subsection{The Symmetry Group of the Constrained Family} 
\label{sub:the_symmetry_group_of_the}

In this section we determine the subgroup of $\widehat{W}\left(A_{3}^{(1)}\right)$ such that its action 
on the root variables fixes the constraint \eqref{eq:dp2-par-constr}. That is, 
\begin{equation*}
a_{0} + a_{1} = a_{2} + a_{3} = \frac{1}{2} \quad \implies\quad	\overline{a}_{0} + \overline{a}_{1} = \overline{a}_{2} + \overline{a}_{3} = \frac{1}{2}.
\end{equation*} 
There are two possibilities:
\begin{equation*}
	\left\{ \begin{aligned}
		\overline{a}_{0} + \overline{a}_{1} &= a_{0} + a_{1} \\
		\overline{a}_{2} + \overline{a}_{3} &= a_{2} + a_{3} 
	\end{aligned}\right.
	\qquad \text{or} \qquad
\left\{ \begin{aligned}
	\overline{a}_{0} + \overline{a}_{1} &= a_{2} + a_{3} \\
	\overline{a}_{2} + \overline{a}_{3} &= a_{0} + a_{1} 
\end{aligned}\right.. 
\end{equation*}
Since $a_{i} = \chi(\alpha_{i})$, this parameter condition, on the level of the symmetry roots, becomes
\begin{equation*}
	\left\{ \begin{aligned}
		w(\alpha_{0} + \alpha_{1}) &= \alpha_{0} + \alpha_{1} \\
		w(\alpha_{2} + \alpha_{3}) &= \alpha_{2} + \alpha_{3}
	\end{aligned}\right.
	\  \text{or} \ 
\left\{ \begin{aligned}
		w(\alpha_{0} + \alpha_{1}) &= \alpha_{2} + \alpha_{3} \\
		w(\alpha_{2} + \alpha_{3}) &= \alpha_{0} + \alpha_{1}
\end{aligned}\right. \ \Leftrightarrow \ w\in \operatorname{Stab}\{\alpha_{0}+\alpha_{1},\alpha_{2} + \alpha_{3}\} < \widehat{W}\left(A_{3}^{(1)}\right).
\end{equation*}
We have the following main result describing the symmetry group of the 
d-$\Pain{II}$ equation.

\begin{theorem} \label{th:maintheorem}
The subgroup of $\widehat{W}(A_3^{(1)})$ compatible with the constraint on root variables coming from 
{\textup d-$\Pain{II}$} is 
	\begin{equation}
		G = \operatorname{Stab}\{\alpha_{0}+\alpha_{1},\alpha_{2} + \alpha_{3}\} \simeq 
		\widetilde{W}\left(A_{1}^{(1)}\right) \times \widetilde{W}\left(A_{1}^{(1)}\right) = 			
		\left\langle 
		\raisebox{-10pt}{\begin{tikzpicture}[
					elt/.style={circle,draw=black!100,thick, inner sep=0pt,minimum size=1.3ex},scale=0.8]
				\begin{scope}[xshift=0cm,yshift=0cm]
				\path 	(0,0) 	node 	(b0) [elt, label={[xshift=0pt, yshift = -20 pt] $\beta_{0}$} ] {}
				        (1,0) node 	(b1) [elt, label={[xshift=0pt, yshift = -20 pt] $\beta_{1}$} ] {};
				\draw [black,thick, double distance = .4ex] (b0) -- (b1);
				\draw [purple, dashed, line width = 0.5pt] (0.5,-0.3) -- (0.5,0.3)	node [pos=1,above] {\small $\sigma_{1}\sigma_{2}\sigma_{1}\sigma_{2}$};
				\end{scope}
				\begin{scope}[xshift=2cm,yshift=0cm]
				\path 	(0,0) 	node 	(b0) [elt, label={[xshift=0pt, yshift = -20 pt] $\gamma_{0}$} ] {}
				        (2,0) node 	(b1) [elt, label={[xshift=0pt, yshift = -20 pt] $\gamma_{1}$} ] {};
				\draw [black,thick, double distance = .4ex] (b0) -- (b1);
				\draw [purple, dashed, line width = 0.5pt] (1,-0.3) -- (1,0.3)	node [pos=1,above] {\small $\sigma_{1}$};
				\end{scope}
			\end{tikzpicture}}
		 \right\rangle,		
	\end{equation}
	where 
	\begin{equation}
		\beta_{0} = \alpha_{0} + \alpha_{3}, \quad \beta_{1} = \alpha_{1} + \alpha_{2}, \quad 
		\gamma_{0} = \alpha_{0} + \alpha_{2},\quad \gamma_{1} = \alpha_{1} + \alpha_{3}.
	\end{equation}
\end{theorem}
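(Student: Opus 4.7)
The stabilizer condition characterising $G := \operatorname{Stab}\{V, W\}$, where $V := \alpha_0+\alpha_1$ and $W := \alpha_2+\alpha_3$, simplifies substantially upon observing that $V + W = \delta$ is preserved by every element of $\widehat{W}(A_3^{(1)})$. Preserving the set $\{V, W\}$ is thus equivalent to sending $V$ to either $V$ or $\delta - V$, reducing the problem to combining the pointwise stabilizer of $V$ with a single swap. My plan is first to exhibit generators in $G$ forming a copy of $\widetilde{W}(A_1^{(1)}) \times \widetilde{W}(A_1^{(1)})$, then to show that nothing else lies in $G$.

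For the lower bound, I would compute directly using \eqref{CA31} that $\beta_0, \beta_1, \gamma_0, \gamma_1 \in L^\perp$ for $L = \operatorname{Span}\{V, W\}$, and that $\beta_i \bullet \gamma_j = 0$ for all $i, j$. Each of the pairs $(\beta_0, \beta_1)$ and $(\gamma_0, \gamma_1)$ satisfies the affine Cartan relation of $A_1^{(1)}$, with $\beta_0+\beta_1 = \gamma_0+\gamma_1 = \delta$ playing the role of the null root. The reflections $s_{\beta_i}$ are genuine elements of $W(A_3^{(1)})$ (since $\beta_i$ have squared length $2$), while the $\gamma_j$ have squared length $4$ and so are \emph{not} roots of $A_3^{(1)}$; nonetheless, the pseudo-reflections $s_{\gamma_j}$ are realisable inside $\widehat{W}(A_3^{(1)})$ as composites of Weyl reflections with the diagram automorphism $\sigma_2$ (e.g.\ $s_{\gamma_0} = \sigma_2 w_0 w_2$), as can be verified against Lemma~\ref{thm:bir-weyl-d5}. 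The diagram-automorphism extensions come from $(\sigma_1\sigma_2)^2 = (\alpha_0\alpha_2)(\alpha_1\alpha_3)$, which swaps $V \leftrightarrow W$ and $\beta_0 \leftrightarrow \beta_1$, and $\sigma_1 = (\alpha_0\alpha_3)(\alpha_1\alpha_2)$, which swaps $V \leftrightarrow W$ and $\gamma_0 \leftrightarrow \gamma_1$. Together with the orthogonality $\beta_i \perp \gamma_j$, these yield a direct product isomorphic to $\widetilde{W}(A_1^{(1)}) \times \widetilde{W}(A_1^{(1)})$.

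For the upper bound, I would exploit the semi-direct decomposition $\widehat{W}(A_3^{(1)}) = W(A_3^{(1)}) \rtimes \operatorname{Aut}(A_3^{(1)})$, with $\operatorname{Aut}(A_3^{(1)}) \cong \mathbb{D}_4$. Direct enumeration of $\mathbb{D}_4$ shows that exactly four elements preserve $\{V, W\}$, forming the Klein four-group $\{e, \sigma_1, (\sigma_1\sigma_2)^2, \sigma_2\sigma_1\sigma_2\}$. For the Weyl part, a Steinberg-type theorem (valid for affine Coxeter groups) identifies the pointwise stabilizer of the real root $V$ in $W(A_3^{(1)})$ with the reflection subgroup generated by reflections in roots orthogonal to $V$; these orthogonal roots are precisely $\{\pm\beta_0 + k\delta, \pm\beta_1 + k\delta : k \in \mathbb{Z}\}$, yielding $W(A_1^{(1)})_\beta$. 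Transitivity of $W(A_3^{(1)})$ on real roots of squared length $2$ (for instance $w_3 w_1 w_2 w_0(V) = W$) supplies one further $\mathbb{Z}_2$-coset swapping $V$ and $W$ within $W(A_3^{(1)})$. Matching the Weyl and automorphism contributions, and identifying the $\gamma$-pseudo-reflections with the relevant ``Weyl $\times \sigma_2$'' composites so as not to double-count, produces exactly $\widetilde{W}(A_1^{(1)}) \times \widetilde{W}(A_1^{(1)})$.

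The main obstacle is precisely this last bookkeeping step: the $\gamma$-pseudo-reflections live in $\widehat{W}(A_3^{(1)})$ only by virtue of involving diagram automorphisms, and the rotation subgroup $\langle \sigma_1\sigma_2\rangle \cong \mathbb{Z}_4$ of $\operatorname{Aut}(A_3^{(1)})$ is identified via $P/Q$ with extended-affine translations, so the Weyl and automorphism sides must be cross-referenced with care to avoid over- or undercounting. As the abstract advocates, this is handled more uniformly by normalizer theory: $G$ appears as the normalizer in $\widehat{W}(A_3^{(1)})$ of the reflection subgroup acting trivially on $L$, whose structure is determined intrinsically by the orthogonal-complement subsystem together with the lattice automorphisms of $L$ realisable inside $\widehat{W}$, immediately giving both $A_1^{(1)}$ subsystems and their diagram-automorphism extensions.
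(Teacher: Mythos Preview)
Your lower-bound computation is fine: the elements $r_{\beta_i}$, $r_{\gamma_j}$, $\sigma_1$, $(\sigma_1\sigma_2)^2$ do lie in $G$ and generate a copy of $\widetilde{W}(A_1^{(1)})\times\widetilde{W}(A_1^{(1)})$. The problem is the upper bound, and specifically the ``Steinberg-type theorem'' you invoke. In a \emph{finite} reflection group the pointwise stabilizer of a vector is generated by the reflections it contains, hence by reflections in roots orthogonal to that vector. For the \emph{linear} action of an affine Weyl group on $V^{(1)}$ this fails: the stabilizer also contains translations $\mathbf{T}_q$ with $(q,V)=0$, and these are not in general products of reflections in roots orthogonal to $V$. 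Concretely, $\gamma_1=\alpha_1+\alpha_3\in Q$ satisfies $(\gamma_1,V)=0$, so $\mathbf{T}_{\gamma_1}\in W(A_3^{(1)})$ fixes $V$; but $\mathbf{T}_{\gamma_1}\notin\langle s_{\beta_0},s_{\beta_1}\rangle$ since the only translations in that $A_1^{(1)}$ subgroup are by multiples of $\beta_1$. Your claimed pointwise stabilizer $W(A_1^{(1)})_\beta$ is therefore too small --- the correct one is $\langle s_{\beta_1}\rangle\ltimes(\mathbb{Z}\mathbf{T}_{\beta_1}\oplus\mathbb{Z}\mathbf{T}_{\gamma_1})\cong W(A_1^{(1)})_\beta\times\mathbb{Z}$ --- and the subsequent counting of cosets in $\widehat{W}/W$ cannot recover from this. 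The extra $\mathbb{Z}\mathbf{T}_{\gamma_1}$ is precisely what eventually combines with diagram automorphisms outside your Klein four-group to produce the second $\widetilde{W}(A_1^{(1)})$ factor, which is why you found the bookkeeping intractable.

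The paper's two proofs avoid this trap in different ways. The direct computation uses the decomposition $\widehat{W}(A_3^{(1)})=\mathbf{T}_P\cdot W(A_3)\cdot\{e,\sigma_1\}$ rather than $W(A_3^{(1)})\rtimes\mathbb{D}_4$: one then runs through the $48$ cosets $\mathbf{T}_P\, w\sigma$ with $w\in W(A_3)$, $\sigma\in\{e,\sigma_1\}$, computing for each the sublattice of $h\in P$ with $\mathbf{T}_h w\sigma\in G$; this is finite and mechanical. The normalizer-theory proof is not the generic statement you sketch but a specific trick: one folds $W(A_3^{(1)})\rtimes\langle\sigma_2,\sigma_3\rangle$ into $\widetilde{W}(B_3^{(1)})$, conjugates so the problem becomes the setwise stabilizer of $\{\mathfrak{b}_1,\delta-\mathfrak{b}_1\}$ for a simple root $\mathfrak{b}_1$, and then applies the Brink--Howlett groupoid to compute $N_{\{\mathfrak{b}_1\}}$. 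The different root lengths in $B_3^{(1)}$ are what make the two $A_1^{(1)}$ factors (and their different squared lengths in the statement) emerge naturally.
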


We will give two proofs of this below, the first by direct calculation using elementary facts about affine Weyl groups and the second making use of the theory of normalizers of parabolic subgroups in Coxeter groups due to Brink and Howlett \cite{howlett,brinkhowlett} (see \cite{yangtranslations} for an introduction to applications of this theory in the context of discrete Painlev\'e equations). 

Before this let us make some remarks about the description of $G$ as an affine Weyl group and also its realization on the level of $\operatorname{Pic}(\mathcal{X})$.
	Note that $\beta_{i}$ are roots of the $A_3^{(1)}$ root system in $V^{(1)}$ since $\beta_{i}\bullet \beta_{i} = -2$, 
	and the corresponding reflections as automorphisms of $\operatorname{Pic}(\mathcal{X})$ are defined in the usual way,
	\begin{equation*}
		r_{\beta_{0}} = r_{\alpha_{0} + \alpha_{3}} = w_{0} w_{3} w_{0},\qquad r_{\beta_{1}} = r_{\alpha_{1} + \alpha_{2}} = w_{1} w_{2} w_{1}.
	\end{equation*}
	However, 
	$\gamma_{i}\bullet \gamma_{i} = -4$ and the reflection
	$r_{\gamma_{i}}(C) = C + \frac{1}{2} (\gamma_{i}\bullet \mathcal{C}) \gamma_{i}$ is only defined on 
	$\operatorname{Pic}^{\mathbb{R}}(\mathcal{X})$. 
	Nevertheless, this formula does define the action on the $\alpha$-roots, since
	\begin{equation*}
		r_{\gamma_{0}}(\alpha_{2k}) = \alpha_{2k} - \gamma_{0},\quad r_{\gamma_{0}}(\alpha_{2k+1}) = \alpha_{2k+1} + \gamma_{0},\quad 
		r_{\gamma_{1}}(\alpha_{2k}) = \alpha_{2k} + \gamma_{1},\quad r_{\gamma_{1}}(\alpha_{2k+1}) = \alpha_{2k+1} - \gamma_{1}.
	\end{equation*}
	Thus, we can decompose this action in terms of the generators,
	\begin{equation}\label{eq:gamma-root-action}
		r_{\gamma_{0}}( \alpha_{0}, \alpha_{1}, \alpha_{2}, \alpha_{3}  ) 
		= ( - \alpha_{2}, \alpha_{0} + \alpha_{1} + \alpha_{2}, -\alpha_{0}, \alpha_{0} + \alpha_{2} + \alpha_{3} )
		= (\sigma_{2} w_{2} w_{0}) (  \alpha_{0}, \alpha_{1}, \alpha_{2}, \alpha_{3} ), 
	\end{equation}
	and so we interpret $r_{\gamma_{0}}$ as $r_{\gamma_{0}} = \sigma_{2} w_{2} w_{0}$ on the whole of $\operatorname{Pic}(\mathcal{X})$. Similarly, $r_{\gamma_{1}} = \sigma_{3} w_{3} w_{1}$.
	
	Note that the induced action of the dynamics \eqref{eq:dP-II-sys} on the new roots becomes a translation on the $\gamma$-sub-lattice
	\begin{equation*}
		\varphi_{*}(  \beta_{0}, \beta_{1}  ) = 
        (\beta_{0}, \beta_{1}) ,\qquad 
		\varphi_{*}(  \gamma_{0}, \gamma_{1} ) =  (\gamma_{0} - \delta, \gamma_{1} + \delta )
		= (\gamma_{0}, \gamma_{1} ) + (-1,1)  \delta,
	\end{equation*}
	and so formally, $\varphi_{*} = \sigma_{1} r_{\gamma_{0}}$. Indeed, with the interpretation \eqref{eq:gamma-root-action}, this is exactly what we have,
	\begin{equation*}
		\varphi_{*} = \sigma_{1} r_{\gamma_{0}} = \sigma_{1} \sigma_{2} w_{2} w_{0}.
	\end{equation*}
    The fact that the element $\varphi_*$, which is a quasi-translation in $\widehat{W}\left(A_3^{(1)}\right)$, is a translation element with respect to the structure of $G$, as an extended affine Weyl group, can be understood in terms of normalizer theory. This will be explained in Remark \ref{rem:translations} at the end of this section. 

\begin{proof}[Proof of Theorem \ref{th:maintheorem} (direct computation)]

Recall from section \ref{sub:affine_weyl_symmetry_group} that we have 
\begin{equation} \label{extendedisoms}
\widehat{W}\left(A_3^{(1)}\right) > \widetilde{W}\left(A_3^{(1)}\right)\cong  W\left(A_3^{(1)}\right)\rtimes\Sigma \cong W\left(A_3\right)\ltimes \mathbf{T}_{P} \cong \mathbf{T}_{P} \rtimes W\left(A_3\right),
\end{equation}
and isomorphisms \eqref{extendedisoms} are realized by 
\begin{equation}
    \begin{aligned}
        \rho   = \sigma_{1} \sigma_2 &= w_{1} w_{2} w_{3} \mathbf{T}_{h_3} = \mathbf{T}_{-h_1} w_{1} w_{2} w_{3},\\
        \rho^2 =  \sigma_{1} \sigma_2  \sigma_{1} \sigma_2 &= w_{2} w_{3} w_{1} w_2 \mathbf{T}_{h_2} = \mathbf{T}_{-h_2} w_{2} w_{3} w_{1} w_{2},\\
        \rho^3 =  \sigma_{2} \sigma_{1} &= w_{3} w_{2} w_{1} \mathbf{T}_{h_{1}} = \mathbf{T}_{-h_3} w_{3} w_{2} w_{1}.
    \end{aligned}
\end{equation}
We have $\left(\operatorname{Aut}\left(A_3^{(1)}\right)/\Sigma\right) \cong \mathbb{Z}_2$ so $\widetilde{W}\left(A_3^{(1)}\right)$ is a normal subgroup of $\widehat{W}\left(A_3^{(1)}\right)$, and we can choose $\sigma_1$ as a representative of the nontrivial coset. 
Then we can write any element of $\widehat{W}\left(A_3^{(1)}\right)$ as
\begin{equation}
    \mathbf{T}_{h} w \sigma,\qquad h\in P, \quad w \in W(A_3), \quad \sigma \in \{1,\sigma_1\}.
\end{equation}
The idea then is, for each of the finite number of choices of $\sigma$ and $w$ to represent the cosets $\mathbf{T}_{P} w \sigma \in  \mathbf{T}_{P}\backslash \widehat{W}(A_3^{(1)}) $, with $\sigma \in \operatorname{Aut}(A_3^{(1)})$, and $w\in W(A_3)$, we can compute $\left\{ h\in P~|~  \mathbf{T}_{h} w \sigma (\alpha_2+\alpha_3)=\alpha_2+\alpha_3\right\}$, and similarly $\left\{ h\in P~|~ \mathbf{T}_{h} w \sigma (\alpha_2+\alpha_3)=\alpha_0+\alpha_1\right\}$.
Since $W(A_3)$ is of order 24, we have 48 choices for $w \sigma$. 
We list the infinite families of elements in the form $\mathbf{T}_h w \sigma$ that exhausts all elements of the stabilizer in Figure~\ref{fig:transl}.

\begin{figure}[ht]
\centering
\begin{equation*}
\begin{array}{c|c|c|c}
    \mathbf{T}_h  & w & \sigma & \mathbf{T}_{h} w \sigma (\alpha_2+\alpha_3) \\[5pt]
 \hline
     \rule{0pt}{2.9ex}\mathbf{T}_{h_1}^{\ell_1} \mathbf{T}_{h_2}^{\ell_2}\mathbf{T}_{h_3}^{-\ell_2}, ~~\ell_1, \ell_2\in \mathbb{Z} & e & e & \alpha_2+\alpha_3 \\[3pt]
      \mathbf{T}_{h_1}^{\ell_1} \mathbf{T}_{h_2}^{\ell_2}\mathbf{T}_{h_3}^{-\ell_2}, ~~\ell_1, \ell_2\in \mathbb{Z} & w_{1} w_{2} w_{1} & e & \alpha_2+\alpha_3 \\[3pt]
      \mathbf{T}_{h_1}^{\ell_1} \mathbf{T}_{h_2}^{\ell_2}\mathbf{T}_{h_3}^{-\ell_2-1}, ~~\ell_1, \ell_2\in \mathbb{Z} & w_{3} w_{2} w_{3} & \sigma_1 & \alpha_2+\alpha_3 \\[3pt]
      \mathbf{T}_{h_1}^{\ell_1} \mathbf{T}_{h_2}^{\ell_2}\mathbf{T}_{h_3}^{-\ell_2-1}, ~~\ell_1, \ell_2\in \mathbb{Z} & w_{3} w_{2} w_{3} w_{1} w_{2} w_{1}& \sigma_1 & \alpha_2+\alpha_3 \\[3pt]
      \mathbf{T}_{h_1}^{\ell_1} \mathbf{T}_{h_2}^{\ell_2}\mathbf{T}_{h_3}^{-\ell_2-1}, ~~\ell_1, \ell_2\in \mathbb{Z} & w_{3} w_{2} w_{3}& e & \alpha_0+\alpha_1 \\[3pt]
      \mathbf{T}_{h_1}^{\ell_1} \mathbf{T}_{h_2}^{\ell_2}\mathbf{T}_{h_3}^{-\ell_2-1}, ~~\ell_1, \ell_2\in \mathbb{Z} & w_{3} w_{2} w_{3} w_{1} w_{2} w_{1}& e & \alpha_0+\alpha_1 \\[3pt]
      \mathbf{T}_{h_1}^{\ell_1} \mathbf{T}_{h_2}^{\ell_2}\mathbf{T}_{h_3}^{-\ell_2}, ~~\ell_1, \ell_2\in \mathbb{Z} & e& \sigma_1 & \alpha_0+\alpha_1 \\[3pt]
      \mathbf{T}_{h_1}^{\ell_1} \mathbf{T}_{h_2}^{\ell_2}\mathbf{T}_{h_3}^{-\ell_2}, ~~\ell_1, \ell_2\in \mathbb{Z} & w_{1} w_{2} w_{1}& \sigma_1 & \alpha_0+\alpha_1
    \end{array}
\end{equation*}	
\caption{Elements in $G$ and their action on $\alpha_{2} + \alpha_{3}$}
	\label{fig:transl}
\end{figure}

This allows us to see that the stabilizer is generated as follows:
\begin{equation} \label{stabilizergenset}
    \operatorname{Stab}\left\{\alpha_0+\alpha_1,\alpha_2+\alpha_3\right\} = \langle w_{1} w_{2} w_{1}, \sigma_1, \mathbf{T}_{h_1}, \mathbf{T}_{h_2-h_3}, \mathbf{T}_{-h_3} w_{3} w_{2} w_{3}\rangle.
\end{equation} 
Computing some relations among these we can identify this by inspection as isomorphic to
\begin{equation}
\widetilde{W}(A_1^{(1)})\times \widetilde{W}(A_1^{(1)}) 
= \left\langle r_{\beta_0},r_{\beta_1},\pi_{\beta} ~|~
\begin{gathered}
    r_{\beta_i}^2 =\pi_{\beta}^2= e, \\
    \pi_{\beta} r_{\beta_{0}}=r_{\beta_{1}}\pi_{\beta}
\end{gathered} 
\right\rangle 
\times  
\left\langle r_{\gamma_0},r_{\gamma_1},\pi_{\gamma} ~|~
\begin{gathered}
    r_{\gamma_i}^2 =\pi_{\gamma}^2= e, \\
    \pi_{\gamma} r_{\gamma_{0}}=r_{\gamma_{1}}\pi_{\gamma}
\end{gathered} 
\right\rangle.
\end{equation}
This is via the following expressions for the generators of $\widetilde{W}\left(A_1^{(1)}\right)\times \widetilde{W}\left(A_1^{(1)}\right)$ in terms of those of the stabilizer as in \eqref{stabilizergenset}.
\begin{equation}
\begin{gathered}
    r_{\beta_0} = w_{1}w_{2}w_{1} \mathbf{T}_{h_1+h_2-h_3}, \quad    
    r_{\beta_1} = w_{1} w_{2} w_{1}, \quad
    \pi_{\beta} = w_{1}w_{2}w_{1}\mathbf{T}_{-h_3} w_{3}w_{2}w_{3} \mathbf{T}_{h_1}, \\
    r_{\gamma_0} = \mathbf{T}_{-h_3} w_{3}w_{2} w_{3} \sigma_1, \quad 
    r_{\gamma_1} = \sigma_1 \mathbf{T}_{-h_3} w_{3}w_{2} w_{3}, \quad
    \pi_{\gamma} = \sigma_1.
\end{gathered}
\end{equation}
We can also rewrite these in terms of the original generators $w_0,\dots,w_3,\sigma_1,\sigma_2$ of $\widehat{W}\left(A_3^{(1)}\right)$ as follows:
\begin{equation}
    \begin{gathered}
        r_{\beta_0} = w_0 w_3 w_0, \quad r_{\beta_1} = w_{1} w_{2} w_{1}, \quad \pi_{\beta} = (\sigma_1 \sigma_2)^2, \\
        r_{\gamma_0} = \sigma_2 w_2 w_0, \quad r_{\gamma_1} = \sigma_1 \sigma_2 \sigma_1 w_3 w_1, \quad \pi_{\gamma} = \sigma_1.
    \end{gathered}
\end{equation}
Verifying that the subgroups $\langle r_{\beta_0},r_{\beta_1},\pi_{\beta},r_{\gamma_0},r_{\gamma_1},\pi_{\gamma}\rangle$ and $\langle w_1 w_2 w_1,\sigma_1,\mathbf{T}_{h_1},\mathbf{T}_{h_{2}-h_{3}},\mathbf{T}_{-h_3}w_3w_2w_3\rangle$ coincide is done by direct calculation.

\end{proof}

Before we give the proof of Theorem \ref{th:maintheorem} using normalizer theory, note that this approach is motivated by the following earlier observations. 
The element $\varphi_{*}$ that gives rise to
the d-$\Pain{II}$ equation \eqref{eq:dP-II-fs} is a
quasi-translation of order two by squared length one, as defined in \cite{Shi:2019:VAAATDPE}. 
That is, $\varphi_{*}^{2}=\psi_{*}$, where $\psi_{*}$ is a translation 
associated to a weight of squared length one. 
Moreover, it was found in Section~\ref{sub:the_symmetry_group_of_the} that the problem of finding the symmetries
of the d-$\Pain{II}$ equation is reduced to finding the setwise stabilizer of $\left\{\alpha_0+\alpha_1, \alpha_2+\alpha_3\right\}\cong A_1^{(1)}$
in $\widehat{W}\left(A_3^{(1)}\right)$. 
The stabilizer of an $A_1^{(1)}$
subsystem in an affine Weyl group (or an extension thereof) can be computed largely by methods developed to compute the normalizer of a standard parabolic subgroup of a Coxeter group. We can make use of general results of \cite{brinkhowlett} to compute a set of generators, establish its structure as an extended affine Weyl group with an underlying root system, and then construct an element of quasi-translation of order two by squared length one 
from considering a translation in the weight lattice of this underlying root system. Finally we show that this quasi-translation is related
to the element $\varphi_{*}$ by conjugation. That is, the symmetry group of $\varphi_{*}$ is this extended affine Weyl group under the same conjugation.

\begin{proof}[Proof of Theorem \ref{th:maintheorem} (normalizer theory)]

When $W$ is an affine Weyl group with simple system $\Delta^{(1)}$ and root system $\Phi=W\cdot \Delta^{(1)}$, a subset $J \subset \Delta^{(1)}$ defines a \emph{standard parabolic subgroup} $W_J = \langle w_{\alpha} ~|~ \alpha \in J\rangle$. 
If $W_J$ can be written as a product $W_J = W_{I_1} \times W_{I_2}\times \cdots \times W_{I_N}$ of standard parabolic subgroups with $I_i\neq \emptyset$ then we call $I_1,\dots,I_N$ the \emph{irreducible components} of $J$. 
The standard parabolic subgroup $W_J$ is itself a Coxeter group and we can consider the associated root system $\Phi_J=W_J\cdot J\subset\Phi$.  
When $W_J$ is finite, the subset $J$ is called \emph{spherical}, and in this case the irreducible components of $W_J$ will be finite Weyl groups.
The results of \cite{brinkhowlett} that we will use state that the normalizer $N(W_J)$ of $W_J$ in $W$ is given by $N(W_J) = N_J \ltimes W_J$, where $N_J=\left\{ w \in W ~|~ w J = J\right\}$ is the setwise stabiliser of $J$, 
and, further, provide a way to obtain a presentation of $N_J$ in terms of generators and relations.

To outline the Brink-Howlett construction of the presentation of $N_J$, we introduce some notation. 
For a spherical subset $I\subset \Delta^{(1)}$ we let $w_I\in W_I$ denote the unique longest element in $W_I$ with respect to the usual Bruhat ordering on $W$. This is the ordering of elements by length $\ell(w)$, defined as the minimal length of an expression for $w$ in terms of simple reflections.
When $I=\emptyset$ we regard $w_I$ as the identity element.
In the Brink-Howlett framework, disjoint subsets $I,J\subset \Delta^{(1)}$ determine a unique element $v[I,J]$, which in the cases relevant to us has the following simple description.
Let $I,J\subset \Delta^{(1)}$ with $I\cap J=\emptyset$ be such that $\Phi_{I\cup J} \setminus \Phi_J$ is finite.
Let $L=I\cup J$ and denote by $L_0$ the union of the irreducible components of $L$, which have nonempty intersection with $I$.
Then $L_0$ is spherical and
\begin{equation}
    v[I,J]= w_{L_0} w_{L_0\cap J}.
\end{equation}
 When $I=\{\alpha\}$ consists of only a single simple root we write $v[\alpha,J]$, and note that when $\alpha$ corresponds to a node of the Dynkin diagram not joined to any of those corresponding to elements of $J$ we have $v[\alpha,J]=w_{\alpha}$ since $L_0\cap J=\emptyset$ and $L_0=I$.

 For an affine Weyl group $W$ and $J\subset \Delta^{(1)}$, the group presentation of $N_J$ is obtained in the Brink-Howlett framework using a groupoid constructed as follows. 
 Let $\mathcal{J}=\left\{ K \subset \Delta^{(1)} ~|~ K=w J \text{ for some } w\in W\right\}$ be the set of $W$\emph{-associates} of $J$.
 Then consider the set
 \begin{equation*}
     \mathcal{G}_{\mathcal{J}} = \left\{ (J_0,v[\alpha_{i_{n-1}},J_{n-1}]\cdots v[\alpha_{i_1},J_1]v[\alpha_{i_0},J_0],J_n)\ | \ J_1,\dots,J_n \in \mathcal{J},\ \alpha_{i_m}\in\Delta^{(1)}, \ v[\alpha_{i_m}] J_{m} = J_{m+1}\right\}.
 \end{equation*}
This has the structure of a groupoid with (partial) operation $(L,v_2,I)(K,v_1,L) = (K,v_2v_1,I)$. 
This can be represented as a graph whose vertices correspond to elements of $\mathcal{J}$, with vertices $I,K$ connected by an edge when there exists $\alpha\in\Delta^{(1)}$ such that $v[\alpha,I]I=K$. 
The graph may also have \emph{loops}, i.e. edges from a vertex $I\in \mathcal{J}$ to itself, when there exists $\alpha\in \Delta^{(1)}$ such that $v[\alpha,I]I=I$.

Then paths in the graph correspond to elements of $\mathcal{G}_{\mathcal{J}}$, and elements of $N_J$ correspond to paths beginning and ending at the vertex $J$:
\begin{equation}
    N_J = \left\{ \boldsymbol{v}=v[\alpha_{i_{n-1}},J_{n-1}]\cdots v[\alpha_{i_1},J_1]v[\alpha_{i_0},J_0] \quad | \quad (J_0, \boldsymbol{v}, J_n)\in \mathcal{G}_{\mathcal{J}}, \quad J_0=J_n =J \right\}.
\end{equation}
Further, $N_J$ itself has the structure of a Coxeter group (see \cite{brinkhowlett} for details), which often turns out to be an affine Weyl group or an extension thereof. 
The above results do not immediately solve our problem of computing and describing $G$, but they take care of most of it.
Some extra work is required, which comes from, firstly, the fact that $\widehat{W}\left(A_3^{(1)}\right)$ is not purely an affine Weyl group and, secondly, the fact that we want to compute the setwise stabilizer not of some $J \subset \Delta^{(1)}$, but rather of $J\cup \{\delta - \theta\}$, where $J$ is spherical and $\theta$ is the highest root of $\Phi_J$.
With the first fact in mind, we begin by noting that we can view $W\left(A_3^{(1)}\right) \rtimes \langle \sigma_2 \rangle < \widetilde{W}\left(A_3^{(1)}\right)< \widehat{W}\left(A_3^{(1)}\right)$ as the extended affine Weyl group $\widetilde{W}\left(B_3^{(1)}\right) = W\left(B_3^{(1)}\right)\rtimes \operatorname{Aut}\left(B_3^{(1)}\right)$.
To do this, take the simple system $\Delta^{(1)}$ of the $A_3^{(1)}$ root system in $V^{(1)}$ as above and introduce 
\begin{equation}\label{AtoB}
    \widetilde{\Delta}^{(1)}=(\mathfrak{b}_0,\mathfrak{b}_1,\mathfrak{b}_2,\mathfrak{b}_3), \qquad \text{where}\quad \mathfrak{b}_0 = \alpha_0, \quad \mathfrak{b}_1 = \alpha_2, \quad \mathfrak{b}_2 = \alpha_3, \quad \mathfrak{b}_3 = \frac{\alpha_1-\alpha_3}{2}.
\end{equation}
In particular, from Equations \eqref{AtoB} and
\eqref{CA31}, we see that $(\mathfrak{b}_j,\mathfrak{b}_j)=2$, $j=0,1,2$ and
$(\mathfrak{b}_3,\mathfrak{b}_3)=1$.
Then $\widetilde{\Delta}^{(1)}$ forms a simple system for a root system of type $B_3^{(1)}$, with Cartan matrix given by 
\begin{equation}
\begin{aligned}
C(B_3^{(1)})=\left(2\frac{(\mathfrak{b}_i,\mathfrak{b}_j)}{(\mathfrak{b}_j,\mathfrak{b}_j)}\right)_{i,j=0}^{3}
&=\left(
\begin{array}{cccc}
 2 & 0 & -1 & 0 \\
 0 & 2 & -1 & 0 \\
 -1 & -1 & 2 & -2 \\
 0 & 0 & -1 & 2 \\
\end{array}
\right),\; 
\mbox{where} \\
    \left((\mathfrak{b}_i,\mathfrak{b}_j)\right)_{i,j=0}^{3}
    &= 
    \left(
\begin{array}{cccc}
 2 & 0 & -1 & 0 \\
 0 & 2 & -1 & 0 \\
 -1 & -1 & 2 & -1 \\
 0 & 0 & -1 & 1 \\
\end{array}
\right).
\end{aligned}
\end{equation}
Here $(~\,,~)$ is the same bilinear form  on $V^{(1)}$ as was used for the $A_3^{(1)}$ root system, but note that for $B$-type root systems the generalised Cartan matrix is not symmetric.
The affine Weyl group 
\begin{equation*}
	W\left(B_{3}^{(1)}\right) = W\left(\raisebox{-20pt}{\begin{tikzpicture}[scale=.55,
			elt/.style={circle,draw=black!100,thick, inner sep=0pt,minimum size=1.6mm}]
		\path 	(0,1) 	node 	(b3) [elt, label={[xshift=-10pt, yshift = -10 pt] \small $\mathfrak{b}_{3}$} ] {}
		        (0,-.16) node 	(b2) [elt, label={[xshift=-10pt, yshift = -10 pt]\small  $\mathfrak{b}_{2}$} ] {}
		        ( 1,-1) node  	(b1) [elt, label={[xshift=10pt, yshift = -10 pt] \small $\mathfrak{b}_{1}$} ] {}
		        ( -1,-1) 	node 	(b0) [elt, label={[xshift=-10pt, yshift = -10 pt] \small $\mathfrak{b}_{0}$} ] {};
		\draw [black,line width=.9pt ] (.12,.92) -- (.12,-.07); 
		\draw [black,line width=.9pt ] (-.12,.92) -- (-.12,-.07); 
            \draw [black, line width=.5pt] (0,.5) -- (.2,.3);
            \draw [black, line width=.5pt] (0,.5) -- (-.2,.3);
		\draw [black,line width=1pt ] (b2) -- (b1); 
		\draw [black,line width=1pt ] (b2) -- (b0) ; 
	\end{tikzpicture}} \right)
	=
	\left\langle s_{0},\dots, s_{3}\ |\ 
	s_{i}^{2} = e,\quad  
    \begin{alignedat}{2}
    (s_{i} s_{j})^2 &= e& &\text{ when 
   				\raisebox{-0.08in}{\begin{tikzpicture}[
   							elt/.style={circle,draw=black!100,thick, inner sep=0pt,minimum size=1.5mm}]
   						\path   ( 0,0) 	node  	(ai) [elt] {}
   						        ( 0.5,0) 	node  	(aj) [elt] {};
   						\draw [black] (ai)  (aj);
   							\node at ($(ai.south) + (0,-0.2)$) 	{$\mathfrak{b}_{i}$};
   							\node at ($(aj.south) + (0,-0.2)$)  {$\mathfrak{b}_{j}$};
   							\end{tikzpicture}}}\\
    (s_{i} s_{j})^3  &= e& &\text{ when 
   				\raisebox{-0.13in}{\begin{tikzpicture}[
   							elt/.style={circle,draw=black!100,thick, inner sep=0pt,minimum size=1.5mm}]
   						\path   ( 0,0) 	node  	(ai) [elt] {}
   						        ( 0.5,0) 	node  	(aj) [elt] {};
   						\draw [black, thick] (ai) -- (aj);
   							\node at ($(ai.south) + (0,-0.2)$) 	{$\mathfrak{b}_{i}$};
   							\node at ($(aj.south) + (0,-0.2)$)  {$\mathfrak{b}_{j}$};
   							\end{tikzpicture}}}\\
    (s_{i} s_{j})^4  &= e& &\text{ when 
   				\raisebox{-0.17in}{\begin{tikzpicture}[
   							elt/.style={circle,draw=black!100,thick, inner sep=0pt,minimum size=1.5mm}]
   						\path   ( 0,0) 	node  	(ai) [elt] {}
   						        ( 0.5,0) 	node  	(aj) [elt] {};
   						\draw [black, thick] ( 0.035,-0.06) -- ( 0.465,-.06);
   						\draw [black, thick] ( 0.035,0.06) -- ( 0.465,0.06);
   							\node at ($(ai.south) + (0,-0.2)$) 	{$\mathfrak{b}_{i}$};
   							\node at ($(aj.south) + (0,-0.2)$)  {$\mathfrak{b}_{j}$};
   							\end{tikzpicture}}}
	\end{alignedat}\right\rangle,
\end{equation*} 
is realised as linear transformations of $V^{(1)}$ by the actions of simple reflections
    $s_i(v) = v - 2\frac{(v,\mathfrak{b}_i)}{(\mathfrak{b}_i,\mathfrak{b}_i)} \mathfrak{b}_i.$
The root system here is then $\widetilde{\Phi}^{(1)}=W\left(B_3^{(1)}\right)\cdot\widetilde{\Delta}^{(1)}$, and $W\left(B_3^{(1)}\right)$ includes reflections $s_{\mathfrak{b}}$, $\mathfrak{b}\in \widetilde{\Phi}^{(1)}$ which act on $V^{(1)}$ by
\begin{equation}
    s_{\mathfrak{b}}(v) = v - 2\frac{(v,\mathfrak{b})}{(\mathfrak{b},\mathfrak{b})} \mathfrak{b}.
\end{equation}
Note that $\alpha_1 = \mathfrak{b}_2 + 2 \mathfrak{b}_3\in \widetilde{\Phi}^{(1)}$ is a root of the $\mathfrak{b}$-system, so $W\left(A_3^{(1)}\right)\subset W\left(B_3^{(1)}\right)$,
but that not all roots of the $\mathfrak{b}$-system are roots of the $\alpha$-system, and in particular the simple reflection $s_3$ acts on $V^{(1)}$ as the automorphism $\sigma_1\sigma_2\sigma_1$ of the $A_3^{(1)}$ Dynkin diagram of the $\alpha$-system.
This is why we use notation $s_{\mathfrak{b}}$ to distinguish the reflections associated to elements of $\widetilde{\Phi}^{(1)}$ from the reflections $r_{\alpha}\in W\left(A_3^{(1)}\right)$.
The single non-trivial automorphism of the $B_3^{(1)}$ Dynkin diagram is given by $\sigma_2$ and we have
$W\left(A_3^{(1)}\right)\rtimes \langle \sigma_2, \sigma_1\sigma_2\sigma_1\rangle = \widetilde{W}\left( B_3^{(1)}\right)=\widehat{W}\left( B_3^{(1)}\right)=\langle s_0,s_1,s_2,s_3,\sigma_2\rangle$, with generators satisfying the defining relations given in Equations
\eqref{funWB3} and \eqref{funWaB3} as summarised on Figure \ref{fig:B3system}:
\begin{subequations}
    \begin{gather}
        s_j^2=1, \;(j\in \{0,1,2,3\}),\nonumber \\
        (s_1s_{2})^3=(s_1s_{3})^2=(s_2s_{3})^4=(s_0s_{2})^3=(s_0s_{3})^2=(s_{0}s_1)^2=e,\label{funWB3}\\\label{funWaB3}
         \sigma_2^2=e, \;\;\sigma_2s_0=s_1\sigma_2.
    \end{gather}
\end{subequations}
The generator $\sigma_1$ of $\operatorname{Aut}(A_3^{(1)})$ is not accounted for in this extension, and we have $\widetilde{W}\left(A_3^{(1)}\right) = \widetilde{W}\left(B_3^{(1)}\right)\rtimes \langle \sigma_1 \rangle$.
The reason for introducing this description is that it makes it easier to apply the normalizer theory to compute the relevant stabilizer -- we can now work with an extension of an affine Weyl group by a smaller Dynkin diagram automorphism group.
In addition, the different root lengths of the two $A_1^{(1)}$ root systems in Theorem \ref{th:maintheorem} can be clearly seen in terms of roots of different lengths in the $B_3^{(1)}$ system.

\begin{figure}[ht]
\begin{equation}\label{eq:a-roots-b31}			
	\raisebox{-42.1pt}{\begin{tikzpicture}[
			elt/.style={circle,draw=black!100,thick, inner sep=0pt,minimum size=2mm}]
		\path 	(0,1) 	node 	(b3) [elt, label={[xshift=-10pt, yshift = -10 pt] $\mathfrak{b}_{3}$} ] {}
		        (0,-.16) node 	(b2) [elt, label={[xshift=-10pt, yshift = -10 pt] $\mathfrak{b}_{2}$} ] {}
		        ( 1,-1) node  	(b1) [elt, label={[xshift=10pt, yshift = -10 pt] $\mathfrak{b}_{1}$} ] {}
		        ( -1,-1) 	node 	(b0) [elt, label={[xshift=-10pt, yshift = -10 pt] $\mathfrak{b}_{0}$} ] {};
		\draw [black,line width=.9pt ] (.1,1) -- (.1,-.16); 
		\draw [black,line width=.9pt ] (-.1,1) -- (-.1,-.16); 
            \draw [black, line width=.5pt] (0,.5) -- (.2,.3);
            \draw [black, line width=.5pt] (0,.5) -- (-.2,.3);
		\draw [black,line width=1pt ] (b2) -- (b1); 
		\draw [black,line width=1pt ] (b2) -- (b0) ; 
		\draw [purple, dashed, line width = 0.5pt] (0,-1.5) -- (0,1.3)	node [pos=0,below] {\small $\sigma_{2}$};
	\end{tikzpicture}} \qquad
        \begin{gathered}
            \begin{alignedat}{2}
			\mathfrak{b}_0 &= \alpha_0, &\qquad & s_{0} = w_{0},  \\
			\mathfrak{b}_1 &= \alpha_2, &\qquad & s_{1} = w_{2}, \\
			\mathfrak{b}_2 &= \alpha_3, &\qquad & s_{2} = w_{3}, \\
			\mathfrak{b}_3 &= \frac{\alpha_1-\alpha_3}{2}, &\qquad &s_{3} = \sigma_3 = \sigma_1 \sigma_2 \sigma_1, 
		\end{alignedat}
        \\
            \delta = \mathfrak{b}_0 + \mathfrak{b}_1 +2 \mathfrak{b}_2+2\mathfrak{b}_3.
        \end{gathered}
\end{equation}
	\caption{Extension of $W\left(A_3^{(1)}\right)$ to $\widetilde{W}\left(B_3^{(1)}\right)$}
	\label{fig:B3system}	
\end{figure}

Note that we can use an element of $\widehat{W}\left(A_3^{(1)}\right)$, e.g. $w_3=s_2$, to send the set $\{\alpha_0+\alpha_1,\alpha_2+\alpha_3\}$ to $\left\{\alpha_2,\delta-\alpha_2  \right\} = \{\mathfrak{b}_1,\delta - \mathfrak{b}_1\}$, so the group $G$ we wish to compute is conjugate to 
\begin{equation}
    \widetilde{G} = \operatorname{Stab}_{\widehat{W}(A_3^{(1)})} \left\{\mathfrak{b}_1, \delta - \mathfrak{b}_1 \right\} = s_2 G s_2.
\end{equation}
To compute $\widetilde{G}$, we first compute $N_J$ for $J=\{\mathfrak{b}_1\}$ in the affine Weyl group $W\left(B_3^{(1)}\right)$ using the Brink-Howlett method, and then find the elements which exchange $\mathfrak{b}_1$ and $\delta-\mathfrak{b}_1$. 
Once this has been achieved, finally, we consider the extra elements coming from the extension of $W\left(B_3^{(1)}\right)$ to $\widehat{W}\left(A_3^{(1)}\right)=\widetilde{W}\left(B_3^{(1)}\right)\rtimes \langle \sigma_1\rangle$.

We begin by constructing the graph associated with the groupoid $\mathcal{G}_{\mathcal{J}}$ for $W=W\left(B_3^{(1)}\right)$ and $J=\{\mathfrak{b}_1\}$.
The vertices correspond to $W$-associates of $J$, the set of which is 
\begin{equation}
    \mathcal{J}=\left\{ J_0 = \left\{\mathfrak{b}_0\right\}, ~~J_1 = \left\{\mathfrak{b}_1\right\}=J, ~~J_2 = \left\{\mathfrak{b}_2\right\}
    \right\}
\end{equation}
where we note that $\{\mathfrak{b}_3\}$ is not included because $\mathfrak{b}_3$ is a short root. 
To determine where the edges are, one proceeds to calculate, for each vertex $J_l\in\mathcal{J}$, the element $v[\mathfrak{b}_k,J_l]$ for each $\mathfrak{b}_k \in \Delta^{(1)}$. 
If $v[\mathfrak{b}_k,J_l] J_l = J_{m}$ then one draws an edge from $J_l$ to $J_m$, labeled by the element $v[\mathfrak{b}_k,J_l]$.
The graph for the case at hand is given in Figure \ref{fig:groupoidB3}.

\begin{figure}[ht]
    \begin{center}
        \begin{tikzpicture}[
			elt/.style={circle,draw=black!100,fill=black!100,thick, inner sep=0pt,minimum size=2mm}]
		\path 	(-3,0) 	node 	(J0) [elt, label={[xshift=0pt, yshift = -25 pt] $J_{0}$} ] {}
		        (0,0) node 	(J2) [elt, label={[xshift=0pt, yshift = -25 pt] $J_{2}$} ] {}
		        ( 3,0) node  	(J1) [elt, label={[xshift=0pt, yshift = -25 pt] $J_{1}$} ] {};
            \path (J0) edge [loop above] node {$s_1$} (J0);
            \path (J0) edge [loop left] node {$s_3$} (J0);
            \path (J0) edge [->] node[above, midway] {$s_0s_2$} (J2);
            \path (J2) edge [->] node[above, midway] {$s_2s_1$} (J1);
            \path (J2) edge [loop above] node {$s_3s_2s_3$} (J2);
            \path (J1) edge [loop above] node {$s_0$} (J1);
            \path (J1) edge [loop right] node {$s_3$} (J1);
	\end{tikzpicture} 
    \end{center}
	\caption{Graph for the groupoid $\mathcal{G}_{\mathcal{J}}$ for $W=W\left(B_3^{(1)}\right)$ and $J=\{\mathfrak{b}_1\}$.}
	\label{fig:groupoidB3}	
\end{figure}
It will be convenient to denote paths in the graph according to the vertices through which they pass,
read from right to left, e.g. $J_2J_0$ for the path from $J_0$ to $J_2$.
To account for loops, we use the symbol for the element itself, e.g. $J_0s_1J_0$ to indicate the loop from $J_0$ to itself via $s_1$.
For a path $p$ in this notation, we denote by $v[p]$ the composition of the elements $v[\mathfrak{b}_i,I_i]$ corresponding to the edges and loops in the path, e.g. $v[J_2J_0s_1J_0]=s_0s_2s_1$ for path that starts at $J_0$, traverses a loop via $s_1$, then goes to $J_2$ via $v[\mathfrak{b}_2,J_0]=s_{0}s_{2}$.

From Figure \ref{fig:groupoidB3}, we see that the elements $w\in N_J$ corresponding to paths starting and ending at $J_1=J$ can be obtained as compositions of the following elements and their inverses:
\begin{equation}
\begin{aligned}
    v[J_1s_0J_1]&=s_0 = s_{\mathfrak{b}_0}, \\
    v[J_1s_3J_1]&=s_3 = s_{\mathfrak{b}_3}, \\
    v[J_1J_2s_3s_2s_3J_2J_1]&= v[J_1 J_2] s_3s_2s_3 v[J_2J_1] = s_2s_1s_3s_2s_3s_1s_2 = s_{\delta-\mathfrak{b}_0}, \\
    v[J_1J_2J_0s_3J_0J_2J_1]&= v[J_1 J_2] v[J_2 J_0] s_3  v[J_0J_2] v[J_2 J_1] = s_2s_1 s_0s_2 s_3 s_2s_0 s_1s_2 = s_{\delta-\mathfrak{b}_3}, \\
    v[J_1J_2J_0s_1J_0J_2J_1]&= v[J_1 J_2] v[J_2 J_0] s_1 v[J_0J_2] v[J_2 J_1] =  s_2s_1 s_0s_2 s_1 s_2s_0 s_1s_2 = s_{\mathfrak{b}_0}, 
\end{aligned}
\end{equation}
where we have identified these as reflections $s_{\mathfrak{b}}$ associated to some roots $\mathfrak{b}\in\widetilde{\Phi}^{(1)}$.

Invoking the Brink-Howlett result that $N_J$ consists of elements corresponding to paths starting and ending at $J$, we have a set of generators.  
Since $(\mathfrak{b}_0,\mathfrak{b}_3)=0$, we have 
\begin{equation} \label{eq:NJdescription}
    N_J = \left\langle s_{\mathfrak{b}_0}, s_{\delta-\mathfrak{b}_0}, s_{\mathfrak{b}_3},s_{\delta-\mathfrak{b}_3} \right\rangle = \left\langle s_{\mathfrak{b}_0}, s_{\delta-\mathfrak{b}_0}\right\rangle 
    \times \left\langle s_{\mathfrak{b}_3}, s_{\delta-\mathfrak{b}_3}\right\rangle.
\end{equation}
Each of the subgroups $\left\langle s_{\mathfrak{b}_0}, s_{\delta-\mathfrak{b}_0}\right\rangle$, $\left\langle s_{\mathfrak{b}_3}, s_{\delta-\mathfrak{b}_3}\right\rangle$ are individually isomorphic to copies of $W(A_1^{(1)})$ defined using simple systems $(\mathfrak{b}_0,\delta-\mathfrak{b}_0)$ and $(\mathfrak{b}_3,\delta-\mathfrak{b}_3)$ as groups of linear transformations of the subspaces of $V^{(1)}$ spanned by these. 
We want to describe the quasi-translation elements of $\widetilde{W}(B_3^{(1)})$, such as that corresponding to the d-$\Pain{II}$ dynamics, as translation elements with respect to the affine Weyl group structure of $N_J$. In the case of non-simply laced type of the ambient group, as we have here with $B_3^{(1)}$, this requires some care.

We introduce elements $\eta_0,\eta_1$ and $\omega_0,\omega_1$ of $V^{(1)}$ to play the roles of simple systems for the two copies $\left\langle s_{\mathfrak{b}_0}, s_{\delta-\mathfrak{b}_0}\right\rangle$ and $\left\langle s_{\mathfrak{b}_3}, s_{\delta-\mathfrak{b}_3}\right\rangle$ of $W(A_1^{(1)})$ in $N_J$. 
Since $\mathfrak{b}_0$ is a long root of $\widetilde{\Phi}^{(1)}$ we take $(\eta_0,\eta_1)=(\mathfrak{b}_0,\delta-\mathfrak{b}_0)$, but for the short root $\mathfrak{b}_3$ we instead introduce $\omega_0=2\mathfrak{b}_3$, and $\omega_1=\delta-\omega_0=\delta-2\mathfrak{b}_3$. 
Even though $\omega_0=2\mathfrak{b}_3$ is not an element of $\widetilde{\Phi}^{(1)}$, by abuse of notation we still write $s_{\omega_0}$ for the reflection $s_{\mathfrak{b}_3}$. 
This scaling is necessary for the description of quasi-translation elements of $\widetilde{W}(B_3^{(1)})$ in terms of weights from the affine Weyl group structure of $N_J$ to be compatible with that of translation elements in terms of the weights of the finite $B_3$ system, which we will say more about in Remark \ref{rem:translations} below. 
It also demonstrates the origin of the different root lengths of the $A_1^{(1)}$-type systems in Theorem \ref{th:maintheorem}. 
We summarise the $\eta$- and $\omega$-systems as well as the relations to the $\mathfrak{b}$- and $\alpha$-systems on Figure \ref{fig:A1A1system}.

\begin{figure}[htb]
\begin{equation}\label{eq:a-roots-a11}			
	\raisebox{-22.1pt}{\begin{tikzpicture}[
			elt/.style={circle,draw=black!100,thick, inner sep=0pt,minimum size=2mm}]
		\path 	(-1,0) 	node 	(eta0) [elt, label={[xshift=-10pt, yshift = -10 pt] $\eta_0$} ] {}
		        (0,0) node 	(eta1) [elt, label={[xshift=10pt, yshift = -10 pt] $\eta_{1}$} ] {}
		        (2,0) node  	(omega0) [elt, label={[xshift=-10pt, yshift = -10 pt] $\omega_{0}$} ] {}
		        (4,0) 	node 	(omega1) [elt, label={[xshift=10pt, yshift = -10 pt] $\omega_{1}$} ] {};
		\draw [black,thick, double distance = .4ex] (eta0) -- (eta1);
		\draw [black,thick, double distance = .4ex] (omega0) -- (omega1);
		\draw [purple, dashed, line width = 0.5pt] (-0.5,-.5) -- (-0.5,.5)	node [pos=0,below] {\small $\pi_{\eta}$};
		\draw [purple, dashed, line width = 0.5pt] (3,-.5) -- (3,.5)	node [pos=0,below] {\small $\pi_{\omega}$};
	\end{tikzpicture}} \qquad 
            \begin{aligned}
			\eta_0 &= \mathfrak{b}_0=\alpha_0,  \\
			\eta_1 &= \delta-\mathfrak{b}_0=\alpha_1+\alpha_2+\alpha_3, \\
			\omega_0 &= 2\mathfrak{b}_3=\alpha_1-\alpha_3, \\
			\omega_1 &= \delta-2\mathfrak{b}_3 = \alpha_0+\alpha_2+2\alpha_3,
		\end{aligned}
\end{equation}
	\caption{Root system of type $(A_1+\underset{|\alpha|^2=4}{A_1})^{(1)}$}
	\label{fig:A1A1system}	
\end{figure}

We next consider the setwise stabilizer in $W(B_3^{(1)})$ of not just $J=\{\mathfrak{b}_1\}$, but 
$\{\mathfrak{b}_1, \delta-\mathfrak{b}_1\}$. 
This is straightforward and we need only to add a single generator since $\operatorname{Stab}_{W(B_3^{(1)})} \left\{\mathfrak{b}_1, \delta-\mathfrak{b}_1\right\} = N_J \cup N_J g$, where $g$ is any element such that $g(\mathfrak{b}_1)=\delta-\mathfrak{b}_1$. 
To see this, note that any element of $W(B_3^{(1)})$ leaves $\delta$ invariant so if $g_1,g_2\in W$ are such that $g_1(\mathfrak{b}_1)=g_2(\mathfrak{b}_1)=\delta-\mathfrak{b}_1$, then $g_1^{-1}g_2\in N_J$.
We can find such an element immediately as $s_2s_1s_3s_0s_2$, which acts on the $\mathfrak{b}_i$ as 
\begin{equation}
    g=s_2s_1s_3s_0s_2 : \mathfrak{b}_0 \mapsto \delta-\mathfrak{b}_0, \quad \mathfrak{b}_1 \mapsto \delta-\mathfrak{b}_1, \quad \mathfrak{b}_2 \mapsto -\mathfrak{b}_2-\delta, \quad \mathfrak{b}_3 \mapsto \delta-\mathfrak{b}_3.
\end{equation}
Adding this element as a generator to the description of $N_J$ in \eqref{eq:NJdescription}, we see it causes the addition of a Dynkin diagram automorphism $\pi_{\eta}$ as indicated on Figure \ref{fig:A1A1system}, since $g=s_2s_1s_3s_0s_2= \pi_{\eta}s_{\omega_1}$. 
We then have 
\begin{equation}
    \operatorname{Stab}_{W(B_3^{(1)})}\left\{ \mathfrak{b}_1, \delta-\mathfrak{b}_1\right\} =  \left\langle s_{\eta_0}, s_{\eta_1},\pi_{\eta}\right\rangle 
    \times \left\langle s_{\omega_0}, s_{\omega_1}\right\rangle  \cong \widetilde{W}\left(A_1^{(1)}\right)\times W\left(A_1^{(1)}\right).
\end{equation}
Lastly, we have to consider the elements of $\operatorname{Aut}\left(A_3^{(1)}\right)$ which have not been accounted for as part of $W\left(B_3^{(1)}\right)$.
First consider the extension of $W(B_3^{(1)})$ to $\widetilde{W}(B_3^{(1)})$ by adding the generator $\sigma_2$, corresponding to the automorphism of the $B_3^{(1)}$ Dynkin diagram that permutes the simple roots according to $\sigma_2=(\mathfrak{b}_0\mathfrak{b}_1)$. 
We see that $\sigma_2$ does not add any new elements to the stabilizer beyond $\pi_{\eta}$ already found, since $\sigma_2 = \pi_{\eta}s_2 s_3s_2$.
Finally, considering the extra Dynkin diagram automorphism $\sigma_1$ which is not accounted for in $\widetilde{W}(B_3^{(1)})$, we see that this does indeed add a new generator to the stabilizer. 
We have $\widetilde{W}(B_3^{(1)})\rtimes \langle \sigma_1\rangle=\widetilde{W}(B_3^{(1)})\cup  \widetilde{W}(B_3^{(1)})\sigma_1$, and $\sigma_1(\mathfrak{b}_1)=\mathfrak{b}_2+2\mathfrak{b}_3$, so because $\sigma_1$ is an involution we have $\sigma_1(\mathfrak{b}_2+2\mathfrak{b}_3)=\mathfrak{b}_1$.
To determine if this extension adds any elements to the setwise stabilizer of $\{\mathfrak{b}_1,\delta-\mathfrak{b}_1\}$, we just have to determine whether there exist $w\in W(B_3^{(1)})$ such that $w(\mathfrak{b}_2+2\mathfrak{b}_3)\in \{\mathfrak{b}_1,\delta-\mathfrak{b}_1\}$, in which case we would get a new element $w\sigma_1$ such that $w\sigma_1(\mathfrak{b}_1)\in\{\mathfrak{b}_1,\delta-\mathfrak{b}_1\}$.
This does indeed happen. For example $w=s_0s_2$ with $w(\mathfrak{b}_2+2\mathfrak{b}_3)=\delta-\mathfrak{b}_1$, and we get an element corresponding to the remaining automorphism of the $(A_1+\underset{|\alpha|^2=4}{A_1^{(1)}})^{(1)}$ diagram $\pi_{\omega}=\sigma_1 s_0s_2$.
Note that adding $\pi_{\omega}$ as a generator to $\operatorname{Stab}_{\widetilde{W}(B_3^{(1)})}\left\{ \mathfrak{b}_1, \delta-\mathfrak{b}_1\right\}=\left\langle s_{\eta_0}, s_{\eta_1},\pi_{\eta},s_{\omega_0}, s_{\omega_1}\right\rangle$ accounts for all of the set $\left\{w \sigma_1 ~|~ w \in \widetilde{W}(B_3^{(1)}),~w\sigma_1(\mathfrak{b}_1)\in \left\{\mathfrak{b}_1,\delta-\mathfrak{b}_1\right\}  \right\}$.
This is because if $g_1 \sigma_1$, $g_2\sigma_1$, with $g_1,g_2\in\widetilde{W}(B_3^{(1)})$, are such that $g_1 \sigma_1(\mathfrak{b}_1)=g_2\sigma_1(\mathfrak{b}_1)=\mathfrak{b}_1$, then $g_1 \sigma_1 g_2\sigma_1 \in \operatorname{Stab}_{\widetilde{W}(B_3^{(1)})}\{\mathfrak{b}_1,\delta-\mathfrak{b}_1\}$, and also $\left\langle s_{\eta_0}, s_{\eta_1},\pi_{\eta},s_{\omega_0}, s_{\omega_1}\right\rangle$ already includes all elements of $\widetilde{W}(B_3^{(1)})$ that interchange $\mathfrak{b}_1$ and $\delta-\mathfrak{b}_1$. 
 
We then arrive at the desired description of the setwise stabilizer
\begin{equation}
    \widetilde{G}=\operatorname{Stab}_{\widehat{W}(A_3^{(1)})}\left\{ \alpha_2, \delta-\alpha_2\right\} = \left\langle s_{\eta_0}, s_{\eta_1}, \pi_{\eta}\right\rangle 
    \times \left\langle s_{\omega_0}, s_{\omega_1}, \pi_{\omega}\right\rangle \cong  \widetilde{W}\left(A_1^{(1)}\right)\times \widetilde{W}\left(A_1^{(1)}\right),
\end{equation}
where expressions for the generators in terms of those of $\widetilde{W}(B_3^{(1)})$ and $\widehat{W}(A_3^{(1)})$ are collected below:
\begin{equation}
    \begin{aligned}
        s_{\eta_0} &=  s_{0}= w_{0}, \\
		s_{\eta_1}  &= s_{1}s_{2}s_{3}s_{2}s_{1}s_{2}s_{3}s_{2}s_{1}=w_{3}w_{1}w_{2}w_{1}w_{3}, \\
         \pi_{\eta}&= \sigma_2 s_2 s_3s_2=\sigma_2\sigma_1\sigma_2\sigma_1 w_3 w_1,
    \end{aligned}
    \quad
    \begin{aligned}
		s_{\omega_0} &= s_{3} = \sigma_1\sigma_2\sigma_1, \\
		s_{\omega_1} &= \sigma_2 s_2 s_0 s_1 s_2 = \sigma_2 w_{3}w_{2}w_{0}w_{3}, \\
        \pi_{\omega} &= \sigma_1 s_0s_2= \sigma_1 w_0 w_3.
    \end{aligned}
\end{equation}
Finally, we conjugate the generators of $\widetilde{G}$ by $w_3=s_2$ to obtain the description of $G$ in Theorem \ref{th:maintheorem} via
\begin{equation}
    \begin{aligned}
        s_2 s_{\eta_0} s_2  &= w_3(w_0)w_3=w_0 w_3 w_0 = r_{\beta_0} , \\
		s_2 s_{\eta_1} s_2   &= w_3(w_{3}w_{1}w_{2}w_{1}w_{3})w_3= w_1w_2w_1 = r_{\beta_1} , \\
        s_2 \pi_{\eta} s_2 &= w_3(\sigma_2\sigma_1\sigma_2\sigma_1 w_3 w_1)w_3=(\sigma_1 \sigma_2)^2,
    \end{aligned}
    \quad
    \begin{aligned}
		s_2 s_{\omega_0} s_2 &= w_3(\sigma_1\sigma_2\sigma_1)w_3 =\sigma_1 \sigma_2 \sigma_1 w_1 w_3=r_{\gamma_1}, \\
		s_2 s_{\omega_1} s_2 &= w_3(\sigma_2 w_{3}w_{2}w_{0}w_{3})w_3= \sigma_2 w_2 w_0=r_{\gamma_0}, \\
        s_2 \pi_{\omega} s_2 &= w_3(\sigma_1 w_0 w_3)w_3 = \sigma_1,
    \end{aligned}
\end{equation}
and we are done. 
\end{proof}    
\begin{remark} \label{rem:translations}
    For the $\omega$-system given in Figure~\ref{fig:A1A1system}
 (the underlying root system for $\left\langle s_{\omega_0}, s_{\omega_1}\right\rangle\cong W(A_1^{(1)})$ found earlier), we choose 
 $\{\omega_0,\omega_1\}=\{2\mathfrak{b}_3,\delta-2\mathfrak{b}_3\}$ instead of 
 using $\{\mathfrak{b}_3,\delta-\mathfrak{b}_3\}$, which also gives a set of simple roots for a root system of $A_1^{(1)}$ type, for reasons given as follows. 
For the $A_1^{(1)}$ root system with simple roots $\omega_0,\omega_1$ (where $\omega_0+\omega_1=\delta$) which gives  $\widetilde{W}\bigl(A^{(1)}_1\bigr)=\left\langle s_{\omega_0}, s_{\omega_1},\pi_{\omega}\right\rangle$, the fundamental weight $H_1^{\omega}$ of 
the underlying $A_1$ root system is defined by 
\begin{equation} \label{A1h1}
 (\omega_1,H_1^{\omega})=1, \quad 
(\delta,H_1^{\omega})=0\quad  \mbox{and}\quad \omega^{\vee}_1=\frac{2\omega_1}{(\omega_1,\omega_1)}=2H_1^{\omega},  
\end{equation}
where $\omega_1^{\vee}$ is the simple coroot of the underlying finite $A_1$ root system. 
A ``translation" by $H_1^{\omega}$ in $\widetilde{W}\bigl(A^{(1)}_1\bigr)$
is given by $t_{H_1^{\omega}}=\pi_{\omega}s_{\omega_1}=\sigma_1 s_0s_2\sigma_2 s_2 s_0 s_1 s_2=\sigma_1\sigma_2s_0s_2=\sigma_1\sigma_2 w_0 w_3$,
which is not a translation in $\widehat{W}\bigl(A^{(1)}_3\bigr)$ since $\pi_{\omega}$ not only
permutes the set $\{\omega_0,\omega_1\}$, but it also permutes the set
$\{\mathfrak{b}_1,\delta-\mathfrak{b}_1\}$ by construction. That is, $t_{H_1^{\omega}}$
is a quasi-translation.
Earlier, we mentioned that we want to construct a quasi-translation
of order two by squared length of one. 
This gives us the condition,
\begin{equation}
    |2H_1^{\omega}|^2=1, \quad\mbox{or}\quad |H_1^{\omega}|^2=\frac{1}{4}.
\end{equation}
Choosing $|\omega_1|^2=4$, we have the simple coroot $\omega^{\vee}_1=\omega_1/2$, 
and $|\omega^{\vee}_1|^2=|\omega_1|^2/4=1$. 
That is, we have $|H_1^{\omega}|^2=1/4$.
In fact
\begin{equation}
    H_1^{\omega}=\frac{\omega^{\vee}_1}{2}=\frac{1}{2}(h_1-h_3),
\end{equation}
in terms of the fundamental weights of the $A_3$ root system. 
One can check that $|h_1-h_3|^2=1$, moreover we have $w_3(h_1-h_3)=h_1-h_2+h_3$. 
That is, $t_{H_1^{\omega}}$
is a quasi-translation
of order two by squared length of one. 
Moreover, it is related  
to $\varphi_{*}$ under conjugation by $w_3$:
\begin{equation}
w_3t_{H_1^{\omega}}w_3=w_3\sigma_1\sigma_2 w_0 w_3w_3=\sigma_1\sigma_2w_2w_0=\varphi_{*}.
\end{equation}
\end{remark}


\section{Relation with the discrete Painlev\'e XXXIV equation and recurrence coefficients for Freud unitary ensembles} 
\label{sec:dP-34}
It is possible to further constrain the parameters in the d-$\Pain{II}$ dynamics. In this section we describe one such example that 
recently appeared in studying gap probabilities of Freud unitary ensembles,  \cite{MinWan:2025:DPXHAFPDFRME}
\footnote{In the next few paragraphs we use the notation of that paper to make it easier to follow. Unfortunately it 
clashes with some of our prior notation, so we change it once we get to the discrete Painlev\'e equation in question.}. 
To study the gap probability
that the interval $(-a,a)$  is free of eigenvalues of the Freud unitary ensemble, the authors consider the weight function of the form
$w(x;a) = w_{0}(x)\chi_{(-a,a)^{c}}(x)$, where $w_{0}(x) = e^{-x^{2m}}$, $m\in \mathbb{Z}_{\geq 1}$, 
and $\chi_{I}(x)$ is the usual characteristic function of the interval $I$. 
Then the gap probability $\mathbb{P}(n;a)$ can be expressed in terms 
of Hankel determinants as
\begin{equation*}
	\mathbb{P}(n;a) = \frac{D_{n}(a)}{D_{n}(0)},
\end{equation*}
where
\begin{align*}
	&D_{n}(a) := \det \left(\int_{-\infty}^{\infty} x^{i + j} w(x;a) \, dx\right)_{i,j=0}^{n-1} = \prod_{j=0}^{n-1} h_{j}(a),
	\intertext{and $h_{j}(a)$ is the usual square $L^{2}$-norm of \emph{monic} orthogonal polynomials,}
	&\int_{-\infty}^{\infty} P_{j}(x; a) P_{k}(x;a) w(x;a)\, dx =: \delta_{j,k} h_{j}(a).
\end{align*}
These monic orthogonal polynomials obey a recurrence relation 
\begin{align*}
	x P_{n}(x;a) &= P_{n+1}(x;a) + \beta_{n}(a) P_{n-1}(x;a),
	\intertext{where the recurrence coefficient $\beta_{n}(a)$ satisfies}
	\beta_{n}(a) &= \frac{h_{n}(a)}{h_{n-1}(a)} = \frac{D_{n+1}(a)D_{n-1}(a)}{D_{n}(a)^{2}}.	
\end{align*}
The recurrence coefficients $\beta_{n}(a)$ are the main objects of study in \cite{MinWan:2025:DPXHAFPDFRME}, where it is shown that 
for $m=1,2,3$ they satisfy the equations in the so-called discrete Painlev\'e XXXIV hierarchy defined in \cite{CreJos:1999:DFSTPH}. 
We are interested in the case $m=1$ when the weight $w_{0}(x) = e^{-x^{2}}$ is the usual Gaussian weight. Then the recurrence
coefficients $\beta_{n}(a)$ satisfy the equation 
\begin{equation}\label{eq:MW-dP34}
	a^{2} (2 \beta_{n} - n)^{2} = \beta_{n} (2 \beta_{n-1} + 2 \beta_{n} - 2n + 1)(2 \beta_{n} + 2 \beta_{n+1} - 2n - 1),
\end{equation}
see \cite[Theorem 2.1]{MinWan:2025:DPXHAFPDFRME}. This difference equation becomes the d-$\Pain{XXXIV}$ equation 
\begin{equation}\label{eq:CJ-dP34}
	(w_{n+1} + w_{n} - z_{n+1})(w_{n} + w_{n-1} - z_{n}) = \frac{(2 w_{n} - C_{3} -z_{n})(2 w_{n} + C_{3} - z_{n+1})}{w_{n}},\quad z_{n} = C_{1} + C_{2} n,
\end{equation}
of \cite[(4.2)]{CreJos:1999:DFSTPH}, where $w_{n} = \frac{4}{a^{2}} \beta_{n}$ and the parameters $C_{i}$ are 
\begin{equation}\label{eq:MW-pars}
		C_{1} = - \frac{2}{a^{2}},\quad C_{2} = \frac{4}{a^{2}},\quad  C_{3} = \frac{2}{a^{2}}.
\end{equation}
 
It turns out that \eqref{eq:CJ-dP34} is just a different geometric realization of the same abstract d-$\Pain{II}$ equation \eqref{eq:abs-dP-II}, but the parameter 
values \eqref{eq:MW-pars} result in an appearance of a so-called \emph{nodal curve}, which further restricts the size of the symmetry group. 

To see this, we first rewrite \eqref{eq:CJ-dP34} as a mapping by putting $f:=w_{n}$, $\overline{f}:=g:=w_{n+1}$, etc., to get
\begin{equation}\label{eq:dP-34-map}
	\phi: (f,g)\mapsto \left(\overline{f},\overline{g}\right) = \left(g, \frac{(2g - C_{3} - z_{n+1})(2 g + C_{3} - z_{n+2})}{g(f + g - z_{n+1})} - g + z_{n+2}\right).
\end{equation}
The base points of this mapping, together with its inverse, are 
\begin{equation}\label{eq:dP-34-points}
	\begin{aligned}
			&\pi_{1}\left(\frac{z_{n+1}-C_{3}}{2},\frac{z_{n+1}+C_{3}}{2}\right), \quad  \pi_{2}\left(\frac{z_{n}+C_{3}}{2},\frac{z_{n+2}-C_{3}}{2}\right),\quad 
			\pi_{3}\left(\infty,0),\quad \pi_{4}(0,\infty\right),\\
			&\pi_{5}\left(\infty,\infty\right)\leftarrow \pi_{6}\left(u_{5} = 0, v_{5}= - 1\right)\leftarrow \pi_{7}\left(u_{6}=0,v_{6}=-(z_{n+1}+4)\right)\\
			&\phantom{\pi_{5}\left(\infty,\infty\right)}
			\leftarrow \pi_{8}\left(u_{7}=0,v_{7}=2C_{2} - (z_{n+1}+4)^{2}\right),
	\end{aligned}
\end{equation}
and the resulting point configuration and its resolution are shown on Figure~\ref{fig:dP-34-surface}, where we use $F_{i}$ to denote 
the exceptional divisor of the blowup at $\pi_{i}$.

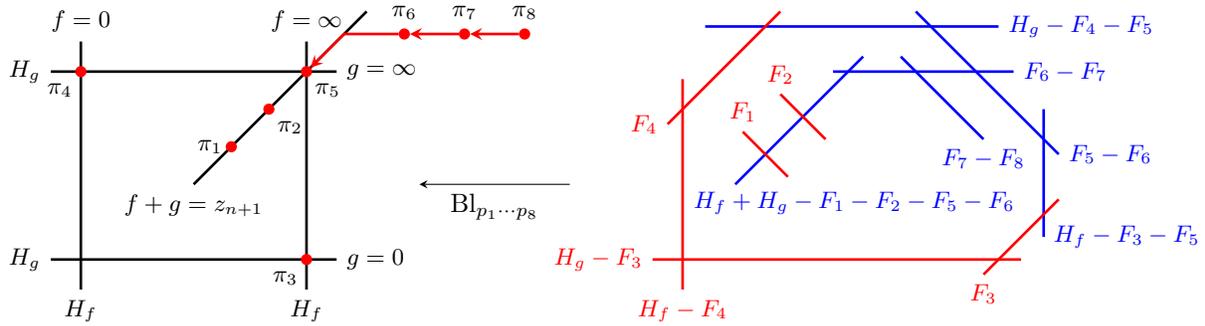
\begin{figure}[ht]
	\begin{tikzpicture}[>=stealth,basept/.style={circle, draw=red!100, fill=red!100, thick, inner sep=0pt,minimum size=1.2mm}]
	\begin{scope}[xshift=0cm,yshift=0cm]
	\draw [black, line width = 1pt] (-0.4,0) -- (3.4,0)	node [pos=0,left] {\small $H_{g}$} node [pos=1,right] {\small $g=0$};
	\draw [black, line width = 1pt] (-0.4,2.5) -- (3.4,2.5) node [pos=0,left] {\small $H_{g}$} node [pos=1,right] {\small $g=\infty$};
	\draw [black, line width = 1pt] (0,-0.4) -- (0,2.9) node [pos=0,below] {\small $H_{f}$} node [pos=1,above] {\small $f=0$};
	\draw [black, line width = 1pt] (3,-0.4) -- (3,2.9) node [pos=0,below] {\small $H_{f}$} node [pos=1,above] {\small $f=\infty$};
	\draw [black, line width = 1pt] (1.5,1) -- (3.8,3.3) node [pos=0,below] {\small $f + g = z_{n+1}$};	
	\node (p1) at (2,1.5) [basept,label={[xshift = -8pt, yshift=-8pt] \small $\pi_{1}$}] {};
	\node (p2) at (2.5,2) [basept,label={[xshift = 8pt, yshift=-15pt] \small $\pi_{2}$}] {};
	\node (p3) at (3,0) [basept,label={[xshift = -8pt, yshift=-15pt] \small $\pi_{3}$}] {};
	\node (p4) at (0,2.5) [basept,label={[xshift = -8pt, yshift=-15pt] \small $\pi_{4}$}] {};
	\node (p5) at (3,2.5) [basept,label={[xshift = 8pt, yshift=-15pt] \small $\pi_{5}$}] {};
	\node (p6) at (4.3,3) [basept,label={[xshift = 0pt, yshift=0pt] \small $\pi_{6}$}] {};
	\node (p7) at (5.1,3) [basept,label={[xshift = 0pt, yshift=0pt] \small $\pi_{7}$}] {};
	\node (p8) at (5.9,3) [basept,label={[xshift = 0pt, yshift=0pt] \small $\pi_{8}$}] {};
	\draw [red, line width = 1pt, ->] (p6) -- (3.5,3) -- (p5);
	\draw [red, line width = 1pt, ->] (p7) -- (p6);
	\draw [red, line width = 1pt, ->] (p8) -- (p7);
	\end{scope}
	\draw [->] (6.5,1)--(4.5,1) node[pos=0.5, below] {$\operatorname{Bl}_{p_{1}\cdots p_{8}}$};
	\begin{scope}[xshift=8cm,yshift=0cm]
	\draw [red, line width = 1pt] (-0.4,0) -- (4.5,0)	node [pos=0, left] {\small $H_{g} - F_{3}$};
	\draw [red, line width = 1pt] (0,-0.4) -- (0,2.4) node [pos=0, below] {\small $H_{f} - F_{4}$};
	\draw [blue, line width = 1pt] (4.8,0.3) -- (4.8,2) node [pos=0, right] {\small $H_{f} - F_{3} - F_{5}$};	
	\draw [blue, line width = 1pt] (0.3,3.1) -- (4.2,3.1) node [pos=1,right] {\small $H_{g} - F_{4} - F_{5}$};
	\draw [red, line width = 1pt] (4,-0.2) -- (5,0.8) node [pos=0,below] {\small $F_{3}$};
	\draw [red, line width = 1pt] (-0.2,1.8) -- (1.3,3.3) node [pos=0, left] {\small $F_{4}$};
	\draw [blue, line width = 1pt] (3.1,3.3) -- (5,1.4) node [pos=1,right] {\small $F_{5} - F_{6}$};
	\draw [blue, line width = 1pt] (2,2.5) -- (4.4,2.5) node [pos=1,right] {\small $F_{6} - F_{7}$};
	\draw [blue, line width = 1pt] (0.7,1) -- (2.4,2.7) node [pos=0,xshift=45pt, yshift=-7pt] {\small $H_{f} + H_{g} - F_{1} - F_{2} - F_{5} - F_{6}$};
	\draw [red, line width = 1pt] (0.8,1.7) -- (1.4,1.1) node [pos=0,above] {\small $F_{1}$};
	\draw [red, line width = 1pt] (1.3,2.2) -- (1.9,1.6) node [pos=0,above] {\small $F_{2}$};
	\draw [blue, line width = 1pt] (2.9,2.7) -- (4,1.6) node [pos=1,below] {\small $F_{7} - F_{8}$};
	\end{scope}
	\end{tikzpicture}
	\caption{The $D_{5}^{(1)}$ Sakai surface for the d-$\Pain{XXXIV}$ equation}
	\label{fig:dP-34-surface}
\end{figure}

Looking at the configuration of $-2$-curves on Figure~\ref{fig:dP-34-surface} we immediately see that this is indeed a $D_{5}^{(1)}$-surface. In fact, the change of basis
on the Picard lattices given by
\begin{equation}\label{eq:change-basis-dP-34}
	\begin{aligned}
		\mathcal{H}_{f} &= \mathcal{H}_{2} + 2\mathcal{H}_{q} - \mathcal{E}_{3} - \mathcal{E}_{5} - \mathcal{E}_{6} - \mathcal{E}_{7}, &\qquad 
				\mathcal{H}_{q} &= \mathcal{H}_{f} + \mathcal{H}_{g} - \mathcal{F}_{1} - \mathcal{F}_{5}, \\ 
		\mathcal{H}_{g} &= \mathcal{H}_{q} + \mathcal{H}_{p} - \mathcal{E}_{3} - \mathcal{E}_{5}, &\qquad 
				\mathcal{H}_{p} &= \mathcal{H}_{f} + 2\mathcal{H}_{g} - \mathcal{F}_{1} - \mathcal{F}_{3}  - \mathcal{F}_{5} - \mathcal{F}_{6}, \\ 
		\mathcal{F}_{1} &= \mathcal{H}_{q} + \mathcal{H}_{p} - \mathcal{E}_{3} - \mathcal{E}_{5} - \mathcal{E}_{6}, &\qquad 
				\mathcal{E}_{1} &= \mathcal{F}_{7},\\
		\mathcal{F}_{2} &= \mathcal{E}_{4}, &\qquad 
				\mathcal{E}_{2} &= \mathcal{F}_{8},\\
		\mathcal{F}_{3} &= \mathcal{H}_{q} - \mathcal{E}_{5}, &\qquad 
				\mathcal{E}_{3} &=  \mathcal{H}_{f} + \mathcal{H}_{g} - \mathcal{F}_{1} - \mathcal{F}_{5} - \mathcal{F}_{6},\\
		\mathcal{F}_{4} &= \mathcal{E}_{8}, &\qquad 
				\mathcal{E}_{4} &= \mathcal{F}_{2},\\
		\mathcal{F}_{5} &= \mathcal{H}_{q} + \mathcal{H}_{p} - \mathcal{E}_{3} - \mathcal{E}_{5} - \mathcal{E}_{7}, &\qquad 
				\mathcal{E}_{5} &= \mathcal{H}_{f} + \mathcal{H}_{g} - \mathcal{F}_{1} - \mathcal{F}_{3} - \mathcal{F}_{5},\\
		\mathcal{F}_{6} &= \mathcal{H}_{q}  - \mathcal{E}_{3}, &\qquad 
				\mathcal{E}_{6} &= \mathcal{H}_{g} - \mathcal{F}_{1},\\
		\mathcal{F}_{7} &= \mathcal{E}_{1}, &\qquad 
				\mathcal{E}_{7} &=\mathcal{H}_{g} -  \mathcal{F}_{5},\\
		\mathcal{F}_{8} &= \mathcal{E}_{2}, &\qquad 
				\mathcal{E}_{8} &= \mathcal{F}_{4}
	\end{aligned}
\end{equation}
results in the following matching of the surface roots,
\begin{alignat*}{2}
	\delta_{0} &= \mathcal{E}_{1} - \mathcal{E}_{2} = \mathcal{F}_{7} - \mathcal{F}_{8},&\qquad  
		\delta_{3} &=\mathcal{H}_{p}-\mathcal{E}_{1} - \mathcal{E}_{3} = \mathcal{F}_{5} - \mathcal{F}_{6}, \\
	\delta_{1} &=  \mathcal{E}_{3} - \mathcal{E}_{4} = \mathcal{H}_{f} + \mathcal{H}_{g} - \mathcal{F}_{1} - \mathcal{F}_{2} - \mathcal{F}_{5} - \mathcal{F}_{6}, &\qquad  
		\delta_{4} &= \mathcal{E}_{5} - \mathcal{E}_{6} = \mathcal{H}_{f} - \mathcal{F}_{3}-\mathcal{F}_{5},\\
	\delta_{2} &= \mathcal{H}_{q}- \mathcal{E}_{1} - \mathcal{E}_{3} = \mathcal{F}_{6} - \mathcal{F}_{7}, &\qquad  
		\delta_{5} &=  \mathcal{E}_{7} - \mathcal{E}_{8} = 	\mathcal{H}_{g} - \mathcal{F}_{4} - \mathcal{F}_{5},
\end{alignat*}
and the symmetry roots,
\begin{equation}\label{eq:sym-roots-dP-34}
	\begin{aligned}
	\alpha_{0} &= \mathcal{H}_{p} - \mathcal{E}_{1} - \mathcal{E}_{2} 
				= \mathcal{H}_{f} + 2\mathcal{H}_{g} - \mathcal{F}_{1} - \mathcal{F}_{3}  - \mathcal{F}_{5} -  \mathcal{F}_{6} - \mathcal{F}_{7} -  \mathcal{F}_{8},\\
	\alpha_{1} &=  \mathcal{H}_{q} - \mathcal{E}_{5} - \mathcal{E}_{6} = - \mathcal{H}_{g} + \mathcal{F}_{1} + \mathcal{F}_{3}, \\
	\alpha_{2} &=\mathcal{H}_{p}-\mathcal{E}_{3} - \mathcal{E}_{4} = \mathcal{H}_{g} - \mathcal{F}_{2} - \mathcal{F}_{3}, \\
	\alpha_{3} &= \mathcal{H}_{q} - \mathcal{E}_{7} - \mathcal{E}_{8} = \mathcal{H}_{f} - \mathcal{F}_{1}-\mathcal{F}_{4}.
	\end{aligned}
\end{equation}

The symplectic form for this point configuration is 
\begin{equation}\label{eq:symp-form-dP-34}
	\omega = \frac{dg \wedge df}{C_{2}(f + g - z_{n+1})},
\end{equation}
and the root variables are 
\begin{equation}\label{eq:root-vars-dP-34}
	a_{0} = \frac{z_{n+2} + C_{3}}{2 C_{2}},\quad a_{1} = -\frac{z_{n+1} + C_{3}}{2 C_{2}},\quad a_{2} = \frac{z_{n+2} - C_{3}}{2 C_{2}},\quad a_{3} = -\frac{z_{n+1} - C_{3}}{2 C_{2}},
\end{equation}
and we see again that the constraint \eqref{eq:dp2-par-constr} holds, $a_{0} + a_{1} = a_{2} + a_{3} = \frac{1}{2}$.

The mapping \eqref{eq:dP-34-map} induces the mapping $\phi_{*}$ on the Picard lattice given by
\begin{alignat*}{2}
	\mathcal{H}_{f}& \mapsto 3 \mathcal{H}_{f} +  \mathcal{H}_{g} - {\mathcal{F}}_{1} - 
	 {\mathcal{F}}_{2} - {\mathcal{F}}_{4} - {\mathcal{F}}_{5} - {\mathcal{F}}_{6} - {\mathcal{F}}_{7}, \qquad&
		\mathcal{H}_{g}& \mapsto \mathcal{H}_{f},  \\
	\mathcal{F}_{1} &\mapsto \mathcal{H}_{f} - \mathcal{F}_{2}, \qquad&
		\mathcal{F}_{5} &\mapsto \mathcal{H}_{f} - \mathcal{F}_{7}.\\
	\mathcal{F}_{2} &\mapsto \mathcal{H}_{f} - {\mathcal{F}}_{1},\qquad&
		\mathcal{F}_{6} &\mapsto \mathcal{H}_{f} - \mathcal{F}_{6},\\
	\mathcal{F}_{3} &\mapsto \mathcal{H}_{f}  - \mathcal{F}_{4},\qquad&
		\mathcal{F}_{7} &\mapsto \mathcal{H}_{f} - \mathcal{F}_{5}, \\
	\mathcal{F}_{4} &\mapsto \mathcal{F}_{8}, \qquad&
		\mathcal{F}_{8} &\mapsto \mathcal{F}_{3}.
\end{alignat*}
Hence we get exactly the same action on the symmetry roots and the root variables as in 
 \eqref{eq:dP-a-root-evol} and \eqref{eq:dP-a-pars-evol},
\begin{align*}
\varphi_{*} : \left(  \alpha_{0}, \alpha_{1}, \alpha_{2}, \alpha_{3} \right) &\mapsto  
			\left( -\alpha_{1}, \alpha_{1} + \alpha_{2} + \alpha_{3}, -\alpha_{3}, 
			\alpha_{0} + \alpha_{1} + \alpha_{3} \right) 
            =\left( -\alpha_{1},  \delta - \alpha_{0}, -\alpha_{3}, 
			\delta - \alpha_{2}\right)\\
			\overline{a}_{0} &= 1 - a_{1},\qquad \overline{a}_{1} = -a_{0},\qquad \overline{a}_{2} = 1 - a_{3}, \qquad \overline{a}_{3} = - a_{2}.			
\end{align*}
In fact, d-$\Pain{II}$ discrete Painlev\'e equation \eqref{eq:dP-II-eq} and d-$\Pain{XXXIV}$ equation \eqref{eq:CJ-dP34} are related by the following
birational change of variables and parameter identification:
\begin{equation}\label{eq:dP2to32-coords}
 \left\{\begin{aligned}
 	x(f,g)&=\frac{g - f - C_{3}}{f + g - z_{n+1}},\\
	y(f,g)&= 1 - \frac{g(f + g - z_{n+1})}{2g - z_{n+1} - C_{3}},\\
	\alpha &= C_{2},\ \beta= C_{1} + C_{2}, \gamma = - C_{3},
 \end{aligned}\right.
\qquad 
	\left\{\begin{aligned}
 	f(x,y)&=(x-1)(y-1) + \frac{(n \alpha + \beta)x + \gamma}{x+1},\\[4pt]
	g(x,y)&=-(x+1)(y-1),\\[6pt]
	C_{1} &= \beta - \alpha,\ C_{2} = \alpha,\ C_{3} = - \gamma.  
	\end{aligned}\right.
\end{equation}	

However, for the Freud weight, parameters $C_{i}$ take very special values \eqref{eq:MW-pars} and the corresponding root variables become
\begin{equation}\label{eq:root-vars-dP-34-sp}
	a_{0} = 1 + \frac{n}{2},\quad a_{1} = -\frac{n+1}{2},\quad a_{2} = \frac{n+1}{2},\quad a_{3} = -\frac{n}{2},
\end{equation}
Note that we now have $a_{1} + a_{2} = 0$, which is the nodal curve condition for the symmetry root $\alpha_{1} + \alpha_{2} = \mathcal{F}_{1} - \mathcal{F}_{2}$.
Indeed, with these values of parameters the base points $\pi_{1}$ and $\pi_{2}$ coalesce along the line $f + g = \frac{4n + 2}{a^{2}}$ 
into the point $\left(\frac{2n}{a^{2}},\frac{2(n+1)}{a^{2}}\right)$. A more careful computation, in fact, shows that we get a cascade of two infinitely close 
points (that we still denote by $\pi_{1,2}$):
\begin{equation}\label{eq:pts12-cascade}
	\pi_{1}\left(\frac{2n}{a^{2}},\frac{2(n+1)}{a^{2}}\right) \leftarrow \pi_{2}(u_{1} = 0, v_{1} = -1).
\end{equation}
The resulting Sakai surface is shown on Figure~\ref{fig:dP-34-surface-sp}, the nodal curve $F_{1} - F_{2}$ is a $-2$-curve disjoint from the 
anti-canonical divisor $-K_{\mathcal{X}}$. 

\begin{figure}[ht]
	\begin{tikzpicture}[>=stealth,basept/.style={circle, draw=red!100, fill=red!100, thick, inner sep=0pt,minimum size=1.2mm}]
	\begin{scope}[xshift=0cm,yshift=0cm]
	\draw [black, line width = 1pt] (-0.4,0) -- (3.4,0)	node [pos=0,left] {\small $H_{g}$} node [pos=1,right] {\small $g=0$};
	\draw [black, line width = 1pt] (-0.4,2.5) -- (3.4,2.5) node [pos=0,left] {\small $H_{g}$} node [pos=1,right] {\small $g=\infty$};
	\draw [black, line width = 1pt] (0,-0.4) -- (0,2.9) node [pos=0,below] {\small $H_{f}$} node [pos=1,above] {\small $f=0$};
	\draw [black, line width = 1pt] (3,-0.4) -- (3,2.9) node [pos=0,below] {\small $H_{f}$} node [pos=1,above] {\small $f=\infty$};
	\draw [black, line width = 1pt] (1.5,1) -- (3.8,3.3) node [pos=0,below] {\small $f + g = z_{n+1}$};	
	\node (p1) at (1.7,1.2) [basept,label={[xshift = -8pt, yshift=-8pt] \small $\pi_{1}$}] {};
	\node (p2) at (2.5,1.5) [basept,label={[xshift = 0pt, yshift=-15pt] \small $\pi_{2}$}] {};
	\node (p3) at (3,0) [basept,label={[xshift = -8pt, yshift=-15pt] \small $\pi_{3}$}] {};
	\node (p4) at (0,2.5) [basept,label={[xshift = -8pt, yshift=-15pt] \small $\pi_{4}$}] {};
	\node (p5) at (3,2.5) [basept,label={[xshift = 8pt, yshift=-15pt] \small $\pi_{5}$}] {};
	\node (p6) at (4.3,3) [basept,label={[xshift = 0pt, yshift=0pt] \small $\pi_{6}$}] {};
	\node (p7) at (5.1,3) [basept,label={[xshift = 0pt, yshift=0pt] \small $\pi_{7}$}] {};
	\node (p8) at (5.9,3) [basept,label={[xshift = 0pt, yshift=0pt] \small $\pi_{8}$}] {};
	\draw [red, line width = 1pt, ->] (p2) -- (2,1.5) -- (p1);
	\draw [red, line width = 1pt, ->] (p6) -- (3.5,3) -- (p5);
	\draw [red, line width = 1pt, ->] (p7) -- (p6);
	\draw [red, line width = 1pt, ->] (p8) -- (p7);
	\end{scope}
	\draw [->] (6.5,1)--(4.5,1) node[pos=0.5, below] {$\operatorname{Bl}_{p_{1}\cdots p_{8}}$};
	\begin{scope}[xshift=8cm,yshift=0cm]
	\draw [red, line width = 1pt] (-0.4,0) -- (4.5,0)	node [pos=0, left] {\small $H_{g} - F_{3}$};
	\draw [red, line width = 1pt] (0,-0.4) -- (0,2.4) node [pos=0, below] {\small $H_{f} - F_{4}$};
	\draw [blue, line width = 1pt] (4.8,0.3) -- (4.8,2) node [pos=0, right] {\small $H_{f} - F_{3} - F_{5}$};	
	\draw [blue, line width = 1pt] (0.3,3.1) -- (4.2,3.1) node [pos=1,right] {\small $H_{g} - F_{4} - F_{5}$};
	\draw [red, line width = 1pt] (4,-0.2) -- (5,0.8) node [pos=0,below] {\small $F_{3}$};
	\draw [red, line width = 1pt] (-0.2,1.8) -- (1.3,3.3) node [pos=0, left] {\small $F_{4}$};
	\draw [blue, line width = 1pt] (3.1,3.3) -- (5,1.4) node [pos=1,right] {\small $F_{5} - F_{6}$};
	\draw [blue, line width = 1pt] (2,2.5) -- (4.4,2.5) node [pos=1,right] {\small $F_{6} - F_{7}$};
	\draw [blue, line width = 1pt] (0.7,1) -- (2.4,2.7) node [pos=0,xshift=45pt, yshift=-7pt] {\small $H_{f} + H_{g} - F_{1} - F_{2} - F_{5} - F_{6}$};
	\draw [blue, line width = 1pt] (1.4,1) -- (2,1.6) node [pos=1,xshift=10pt, yshift=6pt] {\small $F_{1} - F_{2}$};
	\draw [red, line width = 1pt] (1,1.9) -- (1.9,1) node [pos=0,above] {\small $F_{2}$};
	\draw [blue, line width = 1pt] (2.9,2.7) -- (4,1.6) node [pos=1,below] {\small $F_{7} - F_{8}$};
	\end{scope}
	\end{tikzpicture}
	\caption{The $D_{5}^{(1)}$ Sakai surface for the constrained d-$\Pain{XXXIV}$ equation}
	\label{fig:dP-34-surface-sp}
\end{figure}
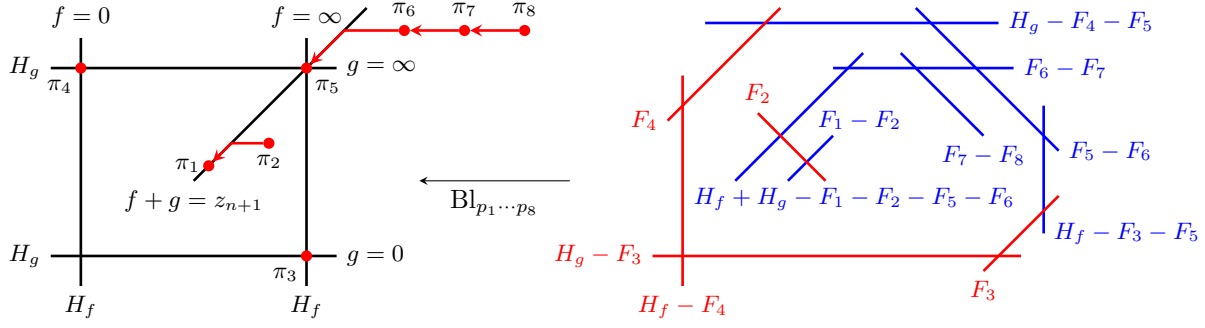

Thus, now in addition to fixing the constraint \eqref{eq:dp2-par-constr} we now also need to fix the nodal curve condition $a_{1} + a_{2} = 0$,
which further restricts the symmetry group of the equation.

\begin{theorem} 
	The subgroup of $\widehat{W}(A_3^{(1)})$ compatible with the constraint 
	\begin{equation} \label{eq:constraintdP34}
		a_{0} + a_{1} = a_{2} + a_{3} = \frac{1}{2},\qquad a_{1} + a_{2} = 0
	\end{equation}
	on root variables is 
    \begin{equation}
        H = \left\{ w \in G ~|~ w(\alpha_1+\alpha_2)=\alpha_1+\alpha_2\right\} 
        = \left\langle r_{\gamma_0},r_{\gamma_1},\sigma_1\right\rangle \cong \widetilde{W}\left(A_1^{(1)}\right). 
    \end{equation}
\end{theorem}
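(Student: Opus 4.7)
The plan is to exploit the direct product decomposition $G = G_\beta \times G_\gamma$ given by Theorem~\ref{th:maintheorem}, with $G_\beta = \langle r_{\beta_0}, r_{\beta_1}, \pi_\beta\rangle$ and $G_\gamma = \langle r_{\gamma_0}, r_{\gamma_1}, \sigma_1\rangle$, and compute the intersection of each factor with the stabiliser of $\beta_1 = \alpha_1+\alpha_2$. Since $H$ is by definition $G \cap \operatorname{Stab}\{\beta_1\}$, it suffices to prove the inclusion $G_\gamma \subseteq \operatorname{Stab}\{\beta_1\}$ together with the triviality $G_\beta \cap \operatorname{Stab}\{\beta_1\} = \{e\}$.

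For $G_\gamma$, I would verify that each generator fixes $\beta_1$ by direct computation. Using \eqref{eq:gamma-root-action},
\begin{equation*}
r_{\gamma_0}(\alpha_1+\alpha_2) = (\alpha_1+\gamma_0)+(\alpha_2-\gamma_0) = \alpha_1+\alpha_2,
\end{equation*}
and analogously $r_{\gamma_1}(\alpha_1+\alpha_2) = (\alpha_1-\gamma_1)+(\alpha_2+\gamma_1)=\alpha_1+\alpha_2$. The diagram automorphism $\sigma_1 = (\alpha_0\alpha_3)(\alpha_1\alpha_2)$ permutes $\alpha_1$ and $\alpha_2$ and therefore also fixes their sum. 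Hence $G_\gamma \subseteq H$.

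For $G_\beta$, I would use the identification $G_\beta \cong \widetilde{W}(A_1^{(1)}) \cong W(A_1)\ltimes \mathbf{T}_{P^\beta}$, where $P^\beta = \mathbb{Z}H^\beta$ is the weight lattice of the underlying finite $A_1$ system with simple root $\beta_1$. The key observation is that $G_\beta$ acts on the two-dimensional subspace $\mathbb{R}\beta_1 + \mathbb{R}\delta$, and the induced action on the quotient $(\mathbb{R}\beta_1 + \mathbb{R}\delta)/\mathbb{R}\delta$ factors through $W(A_1) = \{e, r_{\beta_1}\}$, which acts as $\pm \mathrm{id}$. Consequently, any element fixing $\beta_1$ must have trivial image in this quotient, hence must be a pure translation $\mathbf{T}_{n H^\beta}$; but the translation formula $\mathbf{T}_t(v) = v - (t\bullet v)\delta$ gives $\mathbf{T}_{n H^\beta}(\beta_1) = \beta_1 + c_n\delta$ with $c_n \neq 0$ whenever $n \neq 0$, so only $n=0$ works. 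The diagram automorphism $\pi_\beta$ is absorbed into this description via a relation of the form $\pi_\beta = r_{\beta_1}\mathbf{T}_{H^\beta}$ analogous to the one established for the $\omega$-system in Remark~\ref{rem:translations}, and so contributes no new stabilising elements.

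Combining the two steps yields $H = \{e\}\times G_\gamma = \langle r_{\gamma_0}, r_{\gamma_1}, \sigma_1\rangle \cong \widetilde{W}(A_1^{(1)})$, with the final isomorphism immediate from Theorem~\ref{th:maintheorem}. The main obstacle is the notational bookkeeping needed to make the translation/reflection decomposition of $G_\beta$ and its action on $\beta_1$ completely explicit; this is handled by the same template as in Remark~\ref{rem:translations}, suitably adapted to the long-root normalisation $|\beta_1|^2 = 2$ of the $\beta$-system rather than $|\omega_1|^2=4$.
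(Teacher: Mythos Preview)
Your computation of the second equality $\{w\in G\mid w(\alpha_1+\alpha_2)=\alpha_1+\alpha_2\}=\langle r_{\gamma_0},r_{\gamma_1},\sigma_1\rangle$ is correct and is essentially the same argument the paper gives: check that the generators of $G_\gamma$ fix $\beta_1$, and use the $W(A_1)\ltimes \mathbf{T}_{P(A_1)}$ decomposition of $G_\beta$ to see that only the identity there fixes $\beta_1$. The paper writes out the generator of the translation part explicitly as $\sigma_1\sigma_2\sigma_1\sigma_2\, r_{\beta_1}$ and computes its action, while you argue via the quotient by $\mathbb{R}\delta$; these are the same idea.

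What your proposal omits is the justification of the \emph{first} equality in the theorem, namely that the subgroup ``compatible with the constraint'' is exactly the pointwise stabiliser of $\alpha_1+\alpha_2$ in $G$. This is not purely formal: on the level of root variables the reflection $r_{\beta_1}$ sends $a_1+a_2\mapsto -(a_1+a_2)$ and hence preserves the locus $\{a_1+a_2=0\}$ algebraically, so a naive reading would include it. The paper explains that the constraint $a_1+a_2=0$ forces a nodal curve in the class $\alpha_1+\alpha_2$, and symmetries of the resulting sub-family must preserve effectiveness of divisor classes, hence must fix $\alpha_1+\alpha_2$ itself rather than merely the set $\{\pm(\alpha_1+\alpha_2)\}$. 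This geometric step is an essential part of the statement as formulated and should be included; without it you have only proved the group-theoretic identification, not that this group is the one ``compatible with the constraint''.
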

\begin{proof}
    We already have from Theorem \ref{th:maintheorem} that the subgroup of $\widehat{W}\left(A_3^{(1)}\right)$ compatible with the constraint $a_{0} + a_{1} = a_{2} + a_{3} = \frac{1}{2}$ is $G=\operatorname{Stab}\left\{\alpha_0+\alpha_1,\alpha_2+\alpha_3\right\}$.
    To find the additional condition on the actions of symmetries on $\operatorname{Pic(\mathcal{X})}$ necessary to respect the additional constraint $a_1+a_2=0$, we require geometric considerations. (The need for these considerations arises from the fact that the additional constraint corresponds to the value of the period map on a root of the $A_3^{(1)}$ system being 0, so implies the existence of the nodal curve (see \cite[Prop. 22]{Sak:2001:RSAWARSGPE}) whereas
    the constraint $a_{0} + a_{1} = a_{2} + a_{3} = \frac{1}{2}$ does not correspond to any nodal curves in this way).
    
    While on the level of root variables the actions of some elements may respect the constraint $a_1+a_2=0$ algebraically, on the level of $\operatorname{Pic(\mathcal{X})}$ they might not respect effectiveness of divisor classes and, hence, do not give automorphisms of the sub-family of surfaces defined by the existence of this nodal curve. 
    For example, the action of the reflection $r_{\beta_1}= r_{\alpha_1+\alpha_2} \in G$ on root variables is such that $a_1+a_2\mapsto -a_1-a_2$, so it preserves the subset $\{a_1+a_2=0\}$, but on the level of $\operatorname{Pic}(\mathcal{X})$ it does not preserve effectiveness of divisor classes on the surfaces with nodal curves, so its Cremona action does not restrict to the sub-family. 
    A generic surface in the sub-family defined by the constraints \eqref{eq:constraintdP34} has only a single nodal curve corresponding to the class $\alpha_1+\alpha_2\in\operatorname{Pic}(\mathcal{X})$. 
    Symmetries compatible with these constraints must preserve the subset of $\operatorname{Pic}(\mathcal{X})$ corresponding to the set of nodal curves, so we arrive at the description of the symmetry subgroup $H$ of the sub-family of surfaces defined by the constraints \eqref{eq:constraintdP34} as the elements of $G$ that fix $\alpha_1+\alpha_2$.

    From the description of $G = \left\langle r_{\beta_0},r_{\beta_1},\sigma_1\sigma_2\sigma_1\sigma_2\right\rangle \times \left\langle r_{\gamma_0},r_{\gamma_1},\sigma_1\right\rangle$, we see that the only element of the first factor $\left\langle r_{\beta_0},r_{\beta_1},\sigma_1\sigma_2\sigma_1\sigma_2\right\rangle$ that fixes $\alpha_1+
    \alpha_2$ is the identity. 
    To see this, note that $\left\langle r_{\beta_0},r_{\beta_1},\sigma_1\sigma_2\sigma_1\sigma_2\right\rangle=\left\langle r_{\beta_1}\right\rangle \ltimes \left\langle \sigma_1\sigma_2\sigma_1\sigma_2 r_{\beta_1}\right\rangle\cong W(A_1)\ltimes \mathbf{T}_{P(A_1)}$, where $\sigma_1\sigma_2\sigma_1\sigma_2 r_{\beta_1} : \left( \alpha_0,\alpha_1,\alpha_2,\alpha_3 \right) \mapsto \left(\delta-\alpha_1,-\alpha_0,-\alpha_3,\delta-\alpha_2\right)$ is the quasi-translation in $\widehat{W}(A_3^{(1)})$ corresponding to a translation in $\widetilde{W}(A_1^{(1)})$ which generates the subgroup of translations associated to the weight lattice $P(A_1)$ of the underlying $A_1$ root system.
    On the other hand, every element of $\left\langle r_{\gamma_0},r_{\gamma_1},\sigma_1\right\rangle$ fixes $\alpha_1+\alpha_2$, so we have the result.
    
\end{proof}


\section{Discussion}

In this paper we advocate the point of
view that what should be called the {\em symmetry group} of a discrete Painlev\'e equation is the group of symmetries of the surface (sub-)family forming its configuration space, and the translation elements of which generate the resulting dynamics. In particular, it is an extended affine Weyl group whose birational representation on the surface sub-family generates the equation. 
For many discrete Painlev\'e equations that appear in the literature, this symmetry group will not be the  full, generic, symmetry group attached to the surface type of their configuration space as given in Sakai's classification, but rather a subgroup of 
that generic symmetry group.

We explain this crucial difference on a specific example of a well-known d-$\Pain{II}$ equation that does not correspond to a genuine translation element of the generic symmetry group.
We identify the surface sub-family that is left invariant under the dynamics defined by the equation and then calculate its symmetry group in two different ways: first by a brute force calculation and then using elements from the normalizer theory of parabolic subgroups of Coxeter groups. It is important to emphasize here that the first approach, the brute force calculation, will become unwieldy and ultimately unfeasible for equations with higher dimensional (symmetry) root lattices. The normalizer theory-based proof, on the contrary, is almost entirely algorithmic and further provides insight into the nature of quasi-translations in the generic symmetry group as translations in the relevant symmetry subgroup.

We end the paper with a similar analysis for a sub-case of our main example, which arises in the study of gap probabilities for Freud unitary ensembles, and the symmetry group of which is even further restricted due to the appearance of a nodal curve on the surface on which the equation is regularized.

\medskip

\textbf{Acknowledgments:}\quad
RW would like to thank the Japan Society for the Promotion of Science (JSPS) for financial support through the KAKENHI grants 22H01130 and 23K22401.
AS would like to thank JSPS for financial support through the KAKENHI grant 24K22843. 
AD would like to acknowledge support from BIMSA start-up fund. YS would like to thank R. B. Howlett and N. Nakazono
for the enlightening discussions.

\providecommand{\bysame}{\leavevmode\hbox to3em{\hrulefill}\thinspace}
\providecommand{\MR}{\relax\ifhmode\unskip\space\fi MR }
\providecommand{\MRhref}[2]{%
  \href{http://www.ams.org/mathscinet-getitem?mr=#1}{#2}
}
\providecommand{\href}[2]{#2}

\end{document}